\definecolor{plot0}{HTML}{004488}
\definecolor{plot1}{HTML}{DDAA33}
\definecolor{plot2}{HTML}{BB5566}
\definecolor{plot3}{HTML}{000000}
\definecolor{plot4}{HTML}{AAAAAA}
\DeclarePairedDelimiter{\abs}{\vert}{\vert}
\newcommand*{\inv}[1]{\ensuremath{#1^{-1}}}
\newcommand*{\diff}{\ensuremath{\mathrm{d}}}
\DeclareMathOperator{\var}{var}
\DeclareMathOperator{\one}{\mathds{1}}
\newcommand*{\naturals}{\ensuremath{\mathds{N}}}
\newcommand*{\expect}[2][]{\ensuremath{\mathbb{E}_{#1}\left[#2\right]}}
\newcommand*{\berndist}{\ensuremath{\mathcal{B}}}
\let\rv\bm
\newcommand*{\simdist}{\ensuremath{\overset{d}{=}}}
\newcommand*{\X}{\ensuremath{\rv{X}}}
\newcommand*{\Xt}{\ensuremath{\rv{\tilde{X}}}}
\newcommand*{\Y}{\ensuremath{\rv{Y}}}
\newcommand*{\ratebob}{\ensuremath{{R_B}}}
\newcommand*{\rateeve}{\ensuremath{{R_E}}}
\newcommand*{\rateskg}[1][]{\ignorespaces
	\if\relax\detokenize{#1}\relax
	{\ensuremath{R_{SK}}}%
	\else
	{\ensuremath{R_{SK, #1}}}%
	\fi
}
\newcommand*{\claim}{\ensuremath{\rv{\xi}}}
\newcommand*{\income}{\ensuremath{\rv{\theta}}}
\newcommand*{\surplus}{\ensuremath{\rv{B}}}
\let\budget\surplus
\newcommand*{\netclaim}{\ensuremath{\rv{Z}}}
\newcommand*{\netsum}{\ensuremath{\rv{S}}}
\newcommand*{\probruin}{\ensuremath{\psi}}
\newcommand*{\probsurv}{\ensuremath{\bar{\psi}}}
\newcommand*{\initbudget}{\ensuremath{b_0}}
\newcommand*{\probtx}{\ensuremath{p}}
\newcommand*{\txindicator}{\ensuremath{\rv{P}}}
\newcommand*{\boundupper}{\ensuremath{\Psi}}
\newcommand*{\survtime}{\ensuremath{\tau}}
\newcommand*{\latency}{\ensuremath{\rv{T}}}
\newcommand*{\adjcoeff}{\ensuremath{r}}
\pgfplotsset{compat=newest}
\pgfplotsset{%
	betterplot/.style={
		width=.93\linewidth,
		height=.28\textheight,
		xlabel near ticks,
		ylabel near ticks,
		cycle list name=lineplot cycle,
		mark options=solid,
		xmajorgrids=true,
		xminorgrids=true,
		ymajorgrids=true,
		grid style={line width=.1pt, draw=gray!20},
		major grid style={line width=.25pt,draw=gray!30},
		legend cell align=left,
		legend style = {
			/tikz/every even column/.append style={column sep=0.33cm}
		},
	},
}
\newcommand{\todo}[2][]{\ignorespaces
	\if\relax\detokenize{#1}\relax
	{\color{red}[TODO: #2]}%
	\else
	{\color{red}[TODO (#1): #2]}%
	\fi
}
\definecolor{change}{HTML}{0096b8}
\theoremstyle{plain}%
\newtheorem{thm}{Theorem}
\newtheorem{cor}{Corollary}
\newtheorem{lem}{Lemma}
\theoremstyle{definition}
\newtheorem{prob}{Problem Statement}
\newtheorem*{prob*}{Problem Statement}
\theoremstyle{remark}
\newtheorem{rem}{Remark}
\newtheorem*{rem*}{Remark}
\newtheoremstyle{example}{\topsep}{\topsep}{}{}{\itshape}{.}{ }{}
\theoremstyle{example}
\newtheorem{example}{Example}
\newtheorem*{example*}{Example}
\title{{Reliability and Latency Analysis for Wireless Communication Systems with a Secret-Key Budget}}
\author{%
	Karl-Ludwig Besser, \IEEEmembership{Member, IEEE}, Rafael F. Schaefer, \IEEEmembership{Senior Member, IEEE}, and\\H. Vincent Poor, \IEEEmembership{Life Fellow, IEEE}
	\thanks{Karl-Ludwig Besser was with the Institute for Communications Technology, Technische Universität Braunschweig, 38106 Braunschweig, Germany, and is now with the Department of Electrical and Computer Engineering, Princeton University,  Princeton, NJ 08544, USA (email: {karl.besser}@princeton.edu).
	Rafael F. Schaefer is with the Chair of Information Theory and Machine Learning, the BMBF Research Hub 6G-life, the Cluster of Excellence \enquote{Centre for Tactile Internet with Human-in-the-Loop (CeTI),} and the 5G Lab Germany, Technische Universit\"at Dresden, 01062 Dresden, Germany (e-mail: rafael.schaefer@tu-dresden.de).
	{H. Vincent Poor is with the Department of Electrical and Computer Engineering, Princeton University,  Princeton, NJ 08544, USA (email: {poor}@princeton.edu).}}
	\thanks{The work of K.-L. Besser is supported by the German Research Foundation (DFG) under grant BE\,8098/1-1.
	The work of R.~F. Schaefer is supported by the German Federal Ministry of Education and Research (BMBF) within the national initiative on 6G Communication Systems through the research hub 6G-life under Grant 16KISK001K as well as the 6G-ANNA project under Grant 16KISK103.
	The work of H.~V. Poor is supported by the U.S. National Science Foundation under Grants CNS-2128448 and ECCS-2335876.}
}
\begin{document}
\maketitle

\begin{abstract}\noindent\boldmath
	We consider a wireless communication system with a passive eavesdropper, in which a transmitter and legitimate receiver generate and use key bits to secure the transmission of their data.
	These bits are added to and used from a pool of available key bits.
	In this work, we analyze the reliability of the system in terms of the probability that the budget of available key bits will be exhausted.
	In addition, we investigate the latency before a transmission can take place.
	Since security, reliability, and latency are three important metrics for modern communication systems, it is of great interest to jointly analyze them in relation to the system parameters.
	In particular, we show under what conditions the system may remain in an active state indefinitely, i.e., never run out of available secret-key bits.
	The results presented in this work will allow system designers to adjust the system parameters in such a way that the requirements of the application in terms of both reliability and latency are met.
\end{abstract}
\begin{IEEEkeywords}%
	Physical layer security,
	Ruin theory,
	Secret-key generation,
	Reliability analysis,
	Latency analysis.
\end{IEEEkeywords}
\glsresetall

\section{Introduction}\label{sec:introduction}

The next generation of mobile communication systems~6G, is expected to bring significant advances in terms of capacity, speed, and connectivity~\cite{Saad2020}.
However, with the increasing reliance on wireless networks for a variety of applications and the growing amount of sensitive information transmitted over these networks, the need for robust security measures becomes paramount~\cite{Nguyen2021}.
While cryptography is currently the most widely used technique to protect data transmissions from potential eavesdroppers, physical layer security provides an alternative solution~\cite{Porambage2021,Chorti2022}.
In particular, physical layer security allows a system to achieve a degree of security that is provable from an information theoretic viewpoint, rather than relying on the presumed impracticality of computational problems.
This is done by exploiting the physical properties of the communication channel between the transmitter and a legitimate receiver.
There are several ways to do this~\cite{Poor2017}.
In particular, in addition to using wiretap codes, the characteristics of the communication channel can be used to securely generate bits that are shared only by the transmitter and the legitimate receiver.
These bits can then act as a secret key in the form of a one-time pad to secure a message.

\Gls{skg} uses channel reciprocity to establish a common randomness between the transmitter and the legitimate receiver, which in turn can be used to distill the key bits~\cite{Li2019skg,Ren2011}.
Due to the physical nature of wireless propagation, the channel between transmitter and legitimate receiver is difficult to predict for outside observers.
Therefore, it is not possible for a potential eavesdropper to reconstruct the generated key bits, i.e., they can act as a shared key between transmitter and legitimate receiver.
For the specific implementation of \gls{skg}, various schemes have been proposed in the literature~\cite{KameniNgassa2017,Li2018skg,Aono2005,Sayeed2008}.

In this work, we consider a wireless communication system in which the transmitter and a legitimate receiver perform \gls{skg} to secure the transmission of their data.
The generated key bits are added to and used from a pool of available bits.
We analyze the reliability in terms of the probability that the budget of available key bits will be exhausted.
In addition, we investigate the latency before a transmission can take place.
For the analysis in this work, we leverage tools from ruin theory~\cite{Asmussen2010}.

Classical ruin theory addresses the problem of modeling the evolution of an insurance company's financial surplus and its risk of becoming insolvent~\cite{Asmussen2010}.
In the classical ruin-theoretic model, also known as the Cram\'er-Lundberg model~\cite[Chap.~IV.1]{Asmussen2020}, the insurance company experiences the following two cash flows.
On the one hand, it receives a constant stream of income from insurance premiums.
On the other hand, random claims arrive according to a Poisson process.
This leads to problems such as calculating the probability of ruin, i.e., the probability that the total surplus becomes negative.

Tools from ruin theory have been applied in previous works to various problems in the broad area of wireless communications.
Along the traditional lines of considering monetary quantities, the probability of financial ruin for network-sharing arrangements is considered in~\cite{egan_ruin_2017}.
Other applications include resource allocation, e.g,. spectrum sharing~\cite{manzoor_ruin_2019,Htike2020}, user association~\cite{Kim2021ruin,manzoor_ruin_2021}, and power allocation~\cite{manzoor_ruin_2021}.
In~\cite{manzoor_ruin_2021}, a \gls{uav}-assisted cellular network is considered.
The authors use the traditional ruin-theoretic model of constant income and random claims to describe the available energy of the \glspl{uav}.
The \glspl{uav} have only a finite amount of stored energy, which is consumed when transmitting data.
The energy consumption corresponds to the claims in the ruin-theoretic model.
At the same time, the battery is recharged by harvesting solar energy at a constant rate, which corresponds to the income stream in the traditional insurance model.

While we also use the basic structure of the traditional ruin-theoretic model of income and claim, we consider an adapted model in this work.
In particular, the income rate is not constant but a random variable, since the quality of the wireless channels, and thus the \gls{skg} rate, varies randomly.
Additionally, we assume two different scheduling schemes for generating new key bits and transmitting messages, which affects the way that claims arrive.

To the best of the authors' knowledge, this work is the first to consider a communication system with a secret-key budget and the goal of this work is to provide a framework for a theoretical analysis of such systems.
Therefore, the results derived in this work are applicable to a variety of scenarios and, in particular, they are not limited to a specific underlying \gls{skg} scheme.
Our main contributions and the outline of the manuscript are summarized as follows.
\begin{itemize}
	\item
	We formulate a model for a communication system with a \gls{sk} budget, in which new \gls{sk} bits are generated and then used for securely transmitting messages to a legitimate receiver	(\autoref{sec:system-model}).
	\item
	For this model, we first consider a scheme where \gls{skg} and transmission alternate (\autoref{sec:fixed-time-scheme}).
	Reliability is analyzed in terms of both the outage probability and latency.
	It is shown that the system will eventually run out of key bits to securely transmit messages with probability one (\autoref{cor:prob-ultimate-ruin-deterministic}).
	\item
	In addition, we investigate an alternative operating scheme where messages arrive randomly with probability~$\probtx$ in each time slot and \gls{skg} is performed whenever no secure message is to be transmitted (\autoref{sec:random-arrival-time-scheme}).
	For this setting, it may happen that the communication system can operate indefinitely.
	We characterize the range of $\probtx$ for which this is possible (\autoref{cor:expected-net-claim-random}).
\end{itemize}
The source code to reproduce all presented results and simulations is made publicly available at~\cite{BesserGithub}.
{The provided source code can also be freely adapted to custom numerical examples.}

\textit{Notation:}
Random variables are denoted in capital boldface letters, e.g., $\X$, and their realizations in small letters, e.g., $x$.
We use $F_{\X}$ and $f_{\X}$ for the probability distribution and its density, respectively.
The expectation is denoted by $\mathbb{E}$ and the probability of an event by $\Pr$.
The relation $\rv{X}\simdist\rv{Y}$ holds for two random variables $\rv{X}$ and $\rv{Y}$ when they are distributed according to the same probability distribution.
The Bernoulli distribution with mean~$p$ is denoted as $\berndist(p)$.
The indicator function is written as~$\one$.
\section{Preliminaries and Background Knowledge}\label{sec:background}

The two primary groups of themes that are touched upon in this work are secret-key generation and ruin theory.
In order to improve the flow of reading, we will give a brief introduction to these areas in this section.

\subsection{Secret-Key Generation}\label{sub:intro-skg}
One way to achieve perfect information-theoretic secrecy is the use of a one-time pad~\cite{Shannon1949}.
This one-time pad needs to consist of key bits that are only known to the legitimate parties.
Secret-key generation is a way for two legitimate communication parties to agree on such secret key bits through exploiting the physical properties of their communication channel~\cite{Bloch2021}.
In order to achieve this, multiple models and algorithms have been proposed and analyzed in the literature for various communication scenarios~\cite{Furqan2016,Letafati2021globecom,Gao2021,Linh2023}.
This includes static environments~\cite{Aldaghri2020}, quasi-static fading channels~\cite{Renna2013}, and fast-fading channels with correlated channels~\cite{Zorgui2016,Besser2020wsa}.

The concept behind secret-key agreement is that Alice and Bob have access to shared randomness, which can be used to extract identical bits to serve as key bits~\cite{Maurer1993}.
To correct errors when observing the randomness and protect against eavesdroppers, Alice and Bob can exchange messages across a public channel.
The most notable models for secret-key agreement are the \gls{cm} and \gls{sm}~\cites[Chap.~4]{Bloch2011}{Lai2014}, each differing in the generation of the random observations.
In the \gls{sm}~\cite{Wallace2010}, both legitimate nodes and the eavesdropper have access to some source of common randomness modeled by a joint distribution.
In the \gls{cm}, the randomness originates from transmissions over a noisy wiretap channel~\cite{Ahlswede1993}.
In both models, the public channel is used to generate the secret key bits from the observed randomness at the legitimate nodes, leaving the eavesdropper with no information about it.

In this work, we consider a communication system with a secret-key budget.
As described above, the legitimate nodes perform \gls{skg} and agree on secret key bits using any \gls{skg} scheme.
They then append the newly generated bits to previously generated key bits.
This way, a pool of available \gls{sk} bits is built up at both legitimate communication parties.
Whenever a secure transmission of a message with length~$n$ takes place, the oldest $n$~\gls{sk} bits are used as a one-time pad to encrypt the message.
Since only the legitimate transmitter and receiver know the key bits, the transmission is information-theoretically secure.
Due to the nature of one-time pads, the \gls{sk} bits used for the transmission can only be used once and are therefore removed from the list of available key bits.

\subsection{Ruin Theory}\label{sub:intro-ruin-theory}
Ruin theory originated in economics and actuarial science, where it is used to analyze the solvency of an insurance company~\cite{Asmussen2020}.
The basic idea is that an insurance company deals with two opposing cash flows simultaneously.
On the one hand, they receive money in the form of premiums paid by customers.
On the other hand, there occur claims that are paid by the insurance company.

In the traditional model, the premiums arrive at a constant (positive) rate, while the claims arrive randomly according to a Poisson process.
The main quantity of interest is the probability that the insurance company will go bankrupt.

For this section, let $\tau = \inf \{t \geq 0 \;|\; X(t) \leq 0 \}$ be the time of ruin, i.e., the first time~$t$ at which the aggregate surplus~$X$ falls to zero~\cite{Dickson2016}.
This defines the probability of ultimate ruin~$\probruin_{\infty}$ as the probability that bankruptcy will eventually occur,
\begin{equation*}
	\probruin_{\infty} = \Pr(\tau < \infty)\,.
\end{equation*}
Similarly, the probability of ruin in finite time~$\probruin_{t}=\Pr(\tau \leq t)$ is given as the probability of ruin before time~$t$.

Ruin theory has been extensively studied in the literature of mathematical finance and actuarial science, where many different problems have been discussed and explicitly solved, including expressions for finite-time ruin probability when considering specific claim distributions~\cite{Chan2006,Picard1997} or when considering interest rates~\cite{Cai2002,Wang2006interest}.
For a more detailed overview of the topic, we refer the reader to~\cite{Asmussen2020,Dickson2016}.

Due to the differences between financial systems and communication systems, we cannot simply apply all existing results from the literature.
Instead, we will take some of the above ideas and definitions from the area of ruin theory as a basis and adapt them to the problem considered in this work.
The system model and the exact problem formulation are discussed in the following section.
\section{System Model and Problem Formulation}\label{sec:system-model}
Throughout this work, we consider a wiretap channel, where a transmitter (Alice) wants to securely transmit data to a legitimate receiver (Bob).
The transmission is overheard by a passive eavesdropper (Eve).
Both the channel between Alice and Bob and the channel between Alice and Eve are {quasi-static fading channels} with \gls{awgn} at the receivers and \glspl{snr}~$\X$ and $\Y$, respectively~{\cite[Sec.~5.2]{Bloch2011}}.
The \gls{snr} values are assumed to be independent random variables, which also change independently over time.
The achievable rates to Bob and Eve at time~$t$ are given as
\begin{align}
	\rv{\ratebob}_{\!,t} &= \log_2(1+\X_t)\\
	\rv{\rateeve}_{\!,t} &= \log_2(1+\Y_t)\,,
\end{align}
respectively, where $\X_t$ and $\Y_t$ denote the \gls{snr} values of the main channel and eavesdropper's channel at time~$t$.
{According to the quasi-static model}, we assume that the channels remain constant for the transmission of one codeword and the time index denotes this (discrete) time slot~\cite[Sec.~5.2]{Bloch2011}.

To securely transmit messages to Bob, Alice uses key bits for encryption.
These key bits are generated by the standard \gls{skg} procedures for wiretap channels~\cite[Chap.~4]{Bloch2011} and stored in a pool of available key bits.
They are then used as a one-time pad to secure message bits, i.e., each \gls{sk} bit can be used exactly once and is then removed from the budget of available key bits.

The \gls{skg} rate in time slot~$i$ for the considered model is given by~\cite[Sec.~5.1]{Bloch2011} %
\begin{equation}
	\rv{R}_{\rv{SK},i} = \income_i = \log_2\left(\frac{1+\X_i+\Y_i}{1+\Y_i}\right)\,.
\end{equation}
This corresponds to the income of \gls{sk} bits to the budget in time slot~$i$.
The number of key bits that are required to encrypt a transmission in time slot~$j$ is determined by the transmission rate from Alice to Bob over the respective channel in time slot~$j$
\begin{equation}
	\rv{\ratebob}_{\!,j} = \claim_j = \log_2\left(1 + \Xt_j\right)\,,
\end{equation}
where $\Xt_j$ denotes the \gls{snr} during transmission.
We assume that $\Xt$ is distributed according to the same distribution as $\X$, i.e., $\Xt\simdist\X$; however, they are assumed to be independent.
For simplicity, we assume that the transmit power is the same for both \gls{skg} and data transmission.
However, this could further be optimized in future work.

Therefore, the total number of available \gls{sk} bits at time~$t$ is given by the difference between the total number of generated key bits and the total number of used key bits up to that time slot.
The total number of \gls{sk} bits available at the end of time slot~$t$ is denoted as $\budget_t$.
In addition to generating new key bits, we assume that the system starts with an initial budget~$\initbudget>0$.

An outage is defined as the event that the \gls{sk} budget is exhausted at the end of time slot~$t$, i.e., $\budget_t\leq 0$.
Based on this, we define the \emph{survival probability~$\probsurv_{t}$} as the probability that the communication will last until time slot~$t$, i.e., that $\budget_t>0$ holds for all time slots $i\leq t$,
\begin{equation}\label{eq:def-prob-surv}
	\probsurv_{t}(\initbudget) = \Pr\left(\min_{1\leq i \leq t}\budget_i > 0\right)\,.
\end{equation}
Similarly, we define the \emph{outage probability~$\probruin_{t}$} as its complement
\begin{equation}\label{eq:def-prob-ruin}
	\probruin_{t}(\initbudget) = 1-\probsurv_{t}(\initbudget)\,.
\end{equation}

\begin{figure}
	\centering
	\begin{tikzpicture}%
	\begin{axis}[%
		betterplot,
		width=.98\linewidth,
		xlabel={Time Slot~$t$},
		ylabel={Available \Gls{sk} Bits~$\budget_t$},
		xmin=0,
		xmax=30,
		ymin=0,
		ymax=12,%
		ytick={0, 10},
		yticklabels={$0$, $\initbudget$},
		]
		\addplot+[restrict x to domain=0:10] table [x=t,y=b] {data/samples_budget_illustration.dat};
		\addplot+[restrict x to domain=10:20] table [x=t,y=b] {data/samples_budget_illustration.dat};

		\pgfplotsset{cycle list shift=-2};
		\addplot+[restrict x to domain=20:30] table [x=t,y=b] {data/samples_budget_illustration.dat};
		
		\addplot[very thick, dashed, no markers] coordinates {(10, 0) (10, 12)};
		\addplot[very thick, dashed, no markers] coordinates {(20, 0) (20, 12)};
		
		\draw[fill=plot0,draw=none,opacity=.1] (axis cs: 0,0) rectangle (axis cs: 10, 12);
		\draw[fill=plot1,draw=none,opacity=.1] (axis cs: 10,0) rectangle (axis cs: 20, 12);
		\draw[fill=plot0,draw=none,opacity=.1] (axis cs: 20,0) rectangle (axis cs: 30, 12);
		
		\node[fill=white,font=\small] at (axis cs: 5, 11) {Active State};
		\node[fill=white,font=\small] at (axis cs: 15, 11) {{Recharge State}};
		\node[fill=white,font=\small] at (axis cs: 25, 11) {{Active State}};

		\draw[|<->|, thick] (axis cs:10,1) -- (axis cs:20,1);
		\node[fill=white, font=\small] (latency) at (axis cs:15,1) {Latency~$\latency$};
	\end{axis}
\end{tikzpicture}
	\caption{%
		Exemplary illustration of the temporal progress of the number of available \gls{sk} bits~$\budget_t$.
		During the {active state}, both \gls{skg} and transmission are performed.
		Once the budget is exhausted, the system switches to a {recharge state}, where only new key bits are generated until a certain threshold~$\initbudget$ is reached.
		The latency~$\latency$ is defined as the number of time slots between two {active states}.
		In the shown example, we have $\latency=10$.
	}
	\label{fig:model-budget-illustration}
\end{figure}
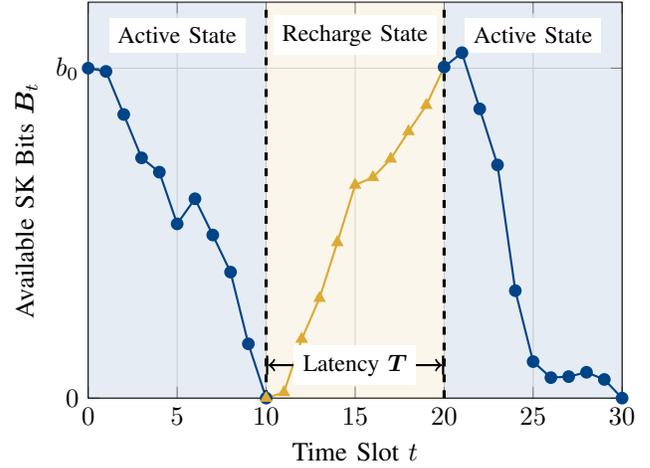

As long as there are still key bits available, i.e., $\budget_t > 0$, we say that the system is in the {\emph{active state}}.
Whenever the number of available \gls{sk} bits drops to zero, the system enters a \emph{recharge phase}, where no data is transmitted and only new \gls{sk} bits are being generated until the initial amount~$\initbudget$ is reached again.
At this point, the system switches back to the {active state}, where both \gls{skg} and data transmission occur, as described above.
An exemplary illustration of this system model can be found in \autoref{fig:model-budget-illustration}.

The duration between two {active states}, in which data transmission takes place, defines the latency~$\latency$ of the system.
It is given as the first time slot within the {recharge phase} in which the budget reaches the specified threshold~$\initbudget$, i.e.,
\begin{equation}\label{eq:def-latency}
	\latency(\initbudget) = \inf \left\{t\geq 0 \;\middle|\; \sum_{i=t_0}^{t}\income_i \geq \initbudget\right\} - t_0 + 1 \,,
\end{equation}
where $t_0$ denotes the first time slot of the recharge phase.
Note that we can set $t_0=1$ without loss of generality if the distribution of~$\income_i$ is stationary over time.
Recall that during the {recharge state} only \gls{skg} takes place and the number of \gls{sk} bits available in a specified time slot is given by the sum of the bits generated up to that time slot.
Since the number of generated bits~$\income_i$ in each time slot~$i$ is random, the latency~$\latency$ is also a random variable.

An overview of the most commonly used variable notation can be found in \autoref{tab:notation}.

The exact nature of the random process~$\budget_t$ depends on the communication scheme of the system.
In this work, we will discuss two different schemes.
First, we will consider a deterministic timing scheme in \autoref{sec:fixed-time-scheme}, where in each time slot both \gls{skg} and data transmission are performed.
Second, we analyze a different model in \autoref{sec:random-arrival-time-scheme}, where data transmissions occur randomly and \gls{skg} is performed when no data is to be transmitted.

\begin{table*}
	\renewcommand*{\arraystretch}{1.25}
	\centering
	\caption{Definitions of the Most Commonly Used Variables}
	\label{tab:notation}
	\begin{tabularx}{\linewidth}{lX}
		\toprule
		$\X_t$ & \Gls{snr} of the main channel (Alice to Bob) during the \gls{skg} phase in time slot~$t$\\
		$\Xt_t$ & \Gls{snr} of the main channel during data transmission in time slot~$t$\\
		$\Y_t$ & \Gls{snr} of the eavesdropper channel (Alice to Eve) during the \gls{skg} phase in time slot~$t$\\
		$\surplus_t$ & Amount of available \gls{sk} bits in time slot~$t$ \\ %
		$\income_t$ & \Gls{skg} rate in time slot~$t$ \\
		$\claim_t$ & Number of used \gls{sk} bits for transmission in time slot~$t$ \\
		$\netclaim_t$ & Net usage of \gls{sk} bits in time slot~$t$ \\
		$\netsum_t = \sum_{i=1}^{t}\netclaim_i$ & Accumulated net usage of \gls{sk} bits up to time slot~$t$ \\
		$\initbudget$ & Amount of initially available \gls{sk} bits \\
		$\probruin_t(\initbudget)$ & Outage probability until time~$t$ with initial budget~$\initbudget$ \\
		$\probsurv_{t}(\initbudget) = 1-\probruin_t(\initbudget)$ & Survival probability until time~$t$ with initial budget~$\initbudget$\\
		$\initbudget^{\survtime}(\varepsilon)$ & Required initial budget to survive at least $\survtime$~time slots with an outage probability of at most~$\varepsilon$\\
		$\latency$ & Latency between two {active} system states\\
		\bottomrule
	\end{tabularx}
\end{table*}

\subsection{Problem Formulation}
Based on the introduction of the system model above, two immediate problems arise, which will be explicitly stated in the following.

\begin{prob}\label{prob:ruin-probabilities}
	What is the outage probability~$\probruin_t(\initbudget)$ for given distributions of the channels and an initial budget~$\initbudget$?
\end{prob}

\begin{prob}
	What is the latency-reliability tradeoff for the described system? %
	In particular, given an application requirement that the transmission needs to last at least $\survtime$ time slots with an outage probability of at most $\varepsilon$, we can determine the minimum required initial budget~$\initbudget^{\survtime}(\varepsilon)$ by the solution of \autoref{prob:ruin-probabilities}.
	However, in order to generate this initial amount of \gls{sk} bits, we need a certain number of time slots before any transmission can start, which defines the latency~$\latency$ of the system.
\end{prob}

Both problems will be analyzed for the deterministic timing scheme and the random timing scheme in \autoref{sec:fixed-time-scheme} and \autoref{sec:random-arrival-time-scheme}, respectively.
\section{Deterministic Scheme}\label{sec:fixed-time-scheme}
As a first scheduling scheme for the {active state}, we consider a deterministic scheme where \gls{skg} and \gls{tx} alternate.
An illustration can be found in \autoref{fig:model-fixed-time-scheme}.
The channels are assumed to vary independently between each \gls{skg} and \gls{tx} block.
However, they remain constant for the entire duration of each single block.
Throughout this section, we refer to the combination of one full cycle of an \gls{skg} block followed by a transmission block as one time slot, i.e., each time slot consists of one \gls{skg} block and one \gls{tx} block, cf.~\autoref{fig:model-fixed-time-scheme}.
\begin{figure}
	\centering
	\begin{tikzpicture}%
	\begin{axis}[
		scale only axis,
		width=.9\linewidth,
		height=.13\textheight,
		axis lines=middle,
		xlabel={Time Slot~$t$},
		xlabel near ticks,
		x tick label as interval,
		ylabel near ticks,
		ylabel={Channel Gain},
		yticklabels={},
		ymajorticks=false,
		xmin=.99,
		xmax=5.2,
		xtick distance=1,
		ymin=0,
		ymax=1.2,
		axis on top,
		]
		\draw[plot2,thick,fill=plot2,fill opacity=.3, draw opacity=0] foreach \n in {1, ..., 5}{(\n,0) rectangle (\n+0.5,1)};
		\draw[plot3,thick,fill=plot3,fill opacity=.3, draw opacity=0] foreach \n in {1, ..., 5}{(\n+0.5,0) rectangle (\n+1,1)};
		
		\draw[black,very thick,dashed] foreach \n in {2, ..., 5}{(\n, 0) -- (\n, 2)};
		\draw foreach \n in {1, ..., 5}{(\n+0.25, 1) node[anchor=south] {SKG}};
		\draw foreach \n in {1, ..., 5}{(\n+0.75, 1) node[anchor=south] {TX}};
		
		\addplot[black, ultra thick, const plot] table[x=x, y=y] {
			x	y
			1	0.25
			1.5	0.5
			2	0.7
			2.5	0.3
			3	.1
			3.5	.6
			4	.5
			4.5	.4
			5	.8
			5.5	.8
		};
	\end{axis}
\end{tikzpicture}
	\caption{%
		Illustration of the scheduling with a deterministic scheme {during the active state}.
		In each time slot, there is an \gls{skg} block followed by a \gls{tx} block.
		While the channels are assumed to remain constant for each individual phase, they change independently between the \gls{skg} and \gls{tx} blocks.
	}
	\label{fig:model-fixed-time-scheme}
\end{figure}
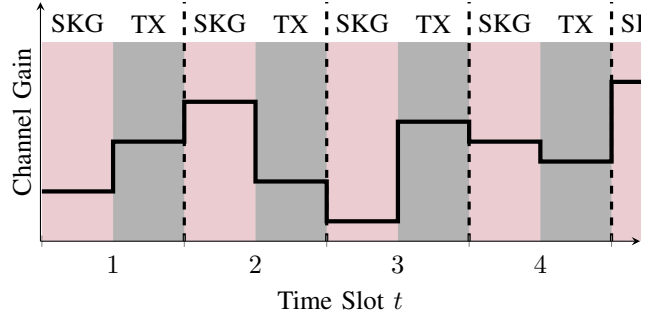

For this scheme, in each time slot~$i$, the amount~$\income_i$ corresponding to the \gls{skg} rate is added to the budget of available \gls{sk} bits, while $\claim_i$~bits are removed from it during the \gls{tx} block.
Thus, the total number of \gls{sk} bits available at the end of time slot~$t$ is given as
\begin{align}
	\surplus_t 
	&= \initbudget + \sum_{i=1}^{t} \income_i - \sum_{i=1}^{t} \claim_i\\
	&= \initbudget - \sum_{i=1}^{t} \netclaim_i\\
	&= \initbudget - \netsum_t\label{eq:def-budget-fixed}
\end{align}
where $\netclaim_i = \claim_i - \income_i$ describes the net usage of \gls{sk} bits in time slot~$i$ and 
\begin{equation}\label{eq:def-net-sum}
	\netsum_{t}=\sum_{i=1}^{t}\netclaim_i
\end{equation}
gives the accumulated net usage until time~$t$.

\subsection{Reliability Analysis}\label{sub:reliability-fixed}
First, we start with a reliability analysis for the scheme described above.
The first important observation is that the system loses \gls{sk} bits in each time slot on average, which is formalized in the following \autoref{lem:expected-net-claim}.
\begin{lem}[{Average Net Usage -- Deterministic Scheme}]\label{lem:expected-net-claim}
	Consider the described communication system in the {active state} with the deterministic timing scheme.
	The expected value of the net usage~$\netclaim_i = \claim_i - \income_i$ is positive, i.e., $\expect{\netclaim_i}>0$.
	Thus, the system's \gls{sk} budget reduces on average in every time slot.
\end{lem}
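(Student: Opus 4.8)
The plan is to reduce the claim to a deterministic, pointwise comparison of the two rate expressions, after first putting them on a common footing via the distributional assumption on the channel. The key algebraic observation is that the income rate simplifies to
\begin{equation*}
	\income_i = \log_2\!\left(\frac{1+\X_i+\Y_i}{1+\Y_i}\right) = \log_2\!\left(1 + \frac{\X_i}{1+\Y_i}\right)\,,
\end{equation*}
which exposes its structure as the logarithm of one plus an \emph{attenuated} main-channel SNR, where the attenuation factor $1/(1+\Y_i)\le 1$ is induced by the eavesdropper. By contrast, the claim $\claim_i = \log_2(1+\Xt_i)$ is governed by the full, unattenuated SNR.

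First I would use the distributional equivalence $\Xt\simdist\X$ to rewrite the claim in expectation as $\expect{\claim_i} = \expect{\log_2(1+\X_i)}$, so that both quantities become functionals of the same main-channel SNR law. Next I would establish the pointwise bound: for every realization with $x\ge 0$ and $y\ge 0$ one has $x/(1+y)\le x$, hence by monotonicity of $\log_2$,
\begin{equation*}
	\log_2\!\left(1 + \frac{x}{1+y}\right) \le \log_2(1+x)\,,
\end{equation*}
with strict inequality on the set $\{x>0,\,y>0\}$. Taking expectations over the independent SNRs and subtracting then gives
\begin{equation*}
	\expect{\netclaim_i} = \expect{\log_2(1+\X_i) - \log_2\!\left(1+\tfrac{\X_i}{1+\Y_i}\right)} \ge 0\,.
\end{equation*}

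The only genuine obstacle is upgrading this to the \emph{strict} inequality $\expect{\netclaim_i}>0$, since equality holds whenever $\Y_i=0$ almost surely (no eavesdropper). Strictness requires that the main and eavesdropper channels be simultaneously nonzero with positive probability, i.e.\ $\Pr(\X_i>0,\,\Y_i>0)>0$; by independence this factors as $\Pr(\X_i>0)\Pr(\Y_i>0)>0$, which holds under any standard continuous fading model such as Rayleigh. The integrand above is then strictly positive on a set of positive probability, so its expectation is strictly positive. The intuition is that the generated secret-key rate is always degraded by the eavesdropper's nonzero SNR, whereas the key bits consumed per transmission are dictated by the full main-channel capacity; on average, more bits are therefore spent than produced.
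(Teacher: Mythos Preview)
Your argument is correct and is essentially the same as the paper's: the paper expands $(1+\Y_i)(1+\Xt_i)=1+\Xt_i+\Y_i+\Xt_i\Y_i$ and compares it to $1+\X_i+\Y_i$ after invoking $\Xt_i\simdist\X_i$, which is algebraically the same comparison as your $\log_2(1+\X_i)$ versus $\log_2\!\bigl(1+\X_i/(1+\Y_i)\bigr)$ after dividing through by $1+\Y_i$. Your explicit identification of the strictness condition $\Pr(\X_i>0,\,\Y_i>0)>0$ is a welcome clarification that the paper leaves implicit.
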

\begin{proof}
	The proof can be found in \autoref{app:proof-lem-avg-net-usage-fixed}.
\end{proof}

An important consequence that follows from \autoref{lem:expected-net-claim} is stated in the following.
\begin{cor}[{Probability of Ultimate Ruin -- Deterministic Scheme}]\label{cor:prob-ultimate-ruin-deterministic}
	The secret-key budget of the system in the {active state} using the deterministic timing scheme will almost surely be exhausted in finite time.%
\end{cor}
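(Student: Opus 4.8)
The plan is to exploit the additive structure $\budget_t = \initbudget - \netsum_t$ from~\eqref{eq:def-budget-fixed} together with the positive-drift conclusion of \autoref{lem:expected-net-claim}. Writing $\survtime = \inf\{t \ge 1 : \budget_t \le 0\}$ for the time of ruin in the sense of \autoref{sub:intro-ruin-theory}, the statement to be shown is exactly $\Pr(\survtime < \infty) = 1$, i.e.\ $\probruin_\infty = 1$. Since $\budget_t \le 0$ is equivalent to $\netsum_t \ge \initbudget$ and $\initbudget$ is a fixed finite constant, it suffices to prove that the accumulated net usage satisfies $\netsum_t \to +\infty$ almost surely; the fixed initial budget is then overtaken after finitely many slots on almost every sample path.

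First I would record the probabilistic structure of the increments. By the system model, the channel realizations $\X_i, \Y_i, \Xt_i$ are independent across time slots and identically distributed, so the per-slot net usages $\netclaim_i = \claim_i - \income_i$ form an i.i.d.\ sequence. \autoref{lem:expected-net-claim} gives that their common mean $\mu = \expect{\netclaim_1}$ is strictly positive, and under the (mild) assumption that the \gls{snr} distributions have finite logarithmic moments this mean is also finite, so that $\netclaim_1$ is integrable.

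Next I would apply the strong law of large numbers to the partial sums $\netsum_t = \sum_{i=1}^{t}\netclaim_i$, which yields $\netsum_t / t \to \mu$ almost surely. Because $\mu > 0$, on almost every sample path there is a (random) index beyond which $\netsum_t \ge (\mu/2)\, t$, and hence $\netsum_t \to +\infty$ almost surely. Combined with $\budget_t = \initbudget - \netsum_t$, this shows that $\budget_t \to -\infty$ almost surely, so the event $\{\budget_t \le 0\}$ occurs for some finite $t$ with probability one; equivalently $\Pr(\survtime < \infty) = 1$, which is the claim. If one does not wish to assume integrability, the same conclusion follows from the classical random-walk dichotomy, by which an i.i.d.\ random walk with $\expect{\netclaim_1} > 0$ drifts to $+\infty$ almost surely even when the mean is infinite.

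The step I expect to carry the real content is the passage from \enquote{positive expected increment} to \enquote{almost-sure ruin}: a single time slot having positive expected net usage does not by itself force depletion, so the argument must aggregate over time via a limit theorem rather than a one-step inequality. The only technical care needed is to verify the i.i.d.\ and integrability hypotheses so that the SLLN (or the random-walk drift result) applies; once $\netsum_t \to \infty$ is established, translating it into finite-time exhaustion of the fixed budget $\initbudget$ is immediate.
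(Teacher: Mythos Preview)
Your proposal is correct and matches the paper's own proof essentially step for step: the paper also reduces the claim to showing that the random walk $\netsum_t$ with positive drift (from \autoref{lem:expected-net-claim}) tends to $+\infty$ almost surely, citing Feller's random-walk dichotomy for that conclusion, and then reads off ruin from $\budget_t=\initbudget-\netsum_t$. Your explicit SLLN argument is simply a concrete instantiation of the same drift-to-infinity step, and your closing remark about the random-walk dichotomy is precisely the tool the paper invokes.
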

\begin{proof}
	Since $\netclaim_i$ has a positive expected value, cf.~\autoref{lem:expected-net-claim}, the sum~$\netsum_{t}$ forms a random walk with (positive) drift.
	From \cite[{Chap.~XII.2, Thm.~1 and 2}]{Feller1991}, it follows that $\netsum_{t}$ drifts to $+\infty$ with probability one.
	Thus, $\budget_t=\initbudget-\netsum_{t}$ will become negative with probability one.
\end{proof}

Having established that the secure communication cannot last indefinitely, it is of interest to quantify the outage probability at any given time slot~$t$.
This outage probability is defined according to \eqref{eq:def-prob-ruin}.
The complementary probability is the survival probability~$\probsurv_{t}$ from \eqref{eq:def-prob-surv}, i.e., the probability that the system will remain in the {active state} until time slot~$t$.
The survival probability~$\probsurv_{t}$ can be described by the following recursive relation~\cite{DeVylder1988,Willmot1993}:
\begin{equation}\label{eq:prob-surv-recursive}
	\probsurv_{t+1}(b) = \int_{-\infty}^{b}\probsurv_{t}(b-s)\diff{F_{\netclaim}(s)}\,,
\end{equation}
where $F_{\netclaim}$ describes the common distribution function of the \gls{iid} net claims~$\netclaim_i$.
The intuition behind this equation is that, in order to survive until time slot~$t+1$, the system needs to first survive until time~$t$.
Additionally, the net usage~$\netclaim_{t+1}$ of the next time slot needs to be small enough to leave a positive budget remaining.

\begin{rem*}
	The recursive relation \eqref{eq:prob-surv-recursive} can also be derived from the standard time-discrete ruin theoretic model without the income process~\cite{Dickson2016}.
\end{rem*}

\begin{example*}
	To illustrate \eqref{eq:prob-surv-recursive}, we solve the first steps explicitly.
	For the initial state of the system at time~$t=0$, the budget~$\budget_0$ is equal to the initial budget~$\initbudget$.
	The initial survival probability~$\probsurv_{0}(\initbudget)$ is therefore a step function, where the step from zero to one occurs at~$\initbudget$.
	Thus, for the next time step, we obtain from \eqref{eq:prob-surv-recursive} that
	\begin{equation*}
		\probsurv_{1}(\initbudget) = \int_{-\infty}^{\initbudget}\diff{F_{\netclaim}}(s) = F_{\netclaim}(\initbudget)\,,
	\end{equation*}
	i.e., the probability of surviving the first time slot is equal to the probability that the net claim~$\netclaim_1$ is less than the initial available budget~$\initbudget$.
	This relation can now be applied recursively to calculate the survival probabilities~$\probsurv_{t}$ for all time slots~$t$.
\end{example*}

However, the recursive equation in \eqref{eq:prob-surv-recursive} can be difficult to solve for general distributions of the net usage~$F_{\netclaim}$.
We are therefore interested in finding an efficient way to compute it numerically.
It is clear that \eqref{eq:prob-surv-recursive} is an integrodifference equation, which can be solved numerically by several different methods.
For a detailed treatise, we refer the interested reader to~\cite{Lutscher2019}.
Throughout the following, we use the \gls{fft} method to calculate~$\probsurv_{t}(\initbudget)$ according to \cite[Chap.~8]{Lutscher2019}.
Our implementation is made publicly available in \cite{BesserGithub}.

Even though \eqref{eq:prob-surv-recursive} can be efficiently solved numerically, it still requires a recursive calculation up to the desired time step~$t$.
Therefore, we provide an easy-to-calculate upper bound on the outage probability~$\probruin_{t}$ in the following theorem.
While this is only a (loose) worst-case bound, it can be easily computed without recursion for any given time slot~$t$.

\begin{thm}[{Worst-Case Bounds of the Ruin Probability -- Deterministic Scheme}]\label{thm:bound-upper-ruin-prob-fix}
	Consider the described communication system employing the deterministic timing scheme.
	The outage probability~$\probruin_{t}(\initbudget)$ can be upper bounded by
	\begin{equation}
		\probruin_{t}(\initbudget) \leq \boundupper_t(\initbudget) < \hat{\boundupper}_t(\initbudget)
	\end{equation}
	with
	\begin{equation}
		\boundupper_t(\initbudget) = \frac{\expect{\max\left(\netsum_t, 0\right)}}{\initbudget}
	\end{equation}
	and
	\begin{equation}\label{eq:bound-upper-loose-ruin-prob-fix}
		\hat{\boundupper}_t(\initbudget) = \frac{\sqrt{\var(\netsum_t)+\expect{\netsum_t}^2}}{\initbudget}\,.
	\end{equation}
\end{thm}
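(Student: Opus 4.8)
The plan is to recognize the outage event as a first-passage (running-maximum) event for the random walk $\netsum_t$ and to control it with Doob's maximal inequality. First I would rewrite the definition of outage: since $\budget_i = \initbudget - \netsum_i$ by \eqref{eq:def-budget-fixed}, an outage up to time~$t$ occurs precisely when $\budget_i \leq 0$ for some $i\leq t$, i.e.\ when $\max_{1\leq i\leq t}\netsum_i \geq \initbudget$. Because $\initbudget>0$, this event is unchanged if each $\netsum_i$ is replaced by its positive part $\positive{\netsum_i}=\max(\netsum_i,0)$, so that
\begin{equation*}
	\probruin_t(\initbudget) = \Pr\Bigl(\max_{1\leq i\leq t}\positive{\netsum_i}\geq \initbudget\Bigr)\,.
\end{equation*}

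Next I would establish that $M_i \coloneqq \positive{\netsum_i}$ is a nonnegative submartingale. The increments $\netclaim_i$ are \gls{iid} with $\expect{\netclaim_i}>0$ by \autoref{lem:expected-net-claim} (and I assume the relevant moments of $\netclaim_i$ are finite, as holds for the channel models considered), so $\netsum_i$ is a submartingale with respect to its natural filtration. Since $x\mapsto \positive{x}$ is convex and nondecreasing, the composition $M_i=\positive{\netsum_i}$ is again a submartingale, and it is nonnegative by construction. Doob's maximal inequality for nonnegative submartingales then yields
\begin{equation*}
	\Pr\Bigl(\max_{1\leq i\leq t} M_i \geq \initbudget\Bigr) \leq \frac{\expect{M_t}}{\initbudget} = \frac{\expect{\max(\netsum_t,0)}}{\initbudget} = \boundupper_t(\initbudget)\,,
\end{equation*}
which proves the first inequality.

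To obtain the looser bound $\hat{\boundupper}_t$, I would bound $\expect{\max(\netsum_t,0)}$ by a second moment. Applying the Cauchy--Schwarz inequality to the nonnegative random variable $\positive{\netsum_t}$ gives $\expect{\positive{\netsum_t}} \leq \sqrt{\expect{(\positive{\netsum_t})^2}}$, and since $(\positive{\netsum_t})^2 \leq \netsum_t^2$ pointwise we have $\expect{(\positive{\netsum_t})^2} \leq \expect{\netsum_t^2}$. Finally, the identity $\expect{\netsum_t^2} = \var(\netsum_t) + \expect{\netsum_t}^2$ turns the right-hand side into the stated form, so $\boundupper_t(\initbudget) \leq \hat{\boundupper}_t(\initbudget)$. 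The inequality is strict whenever $\Pr(\netsum_t < 0) > 0$, since then $\expect{(\positive{\netsum_t})^2} < \expect{\netsum_t^2}$ strictly; for a random walk whose increments take negative values with positive probability this holds for every finite~$t$, giving the strict bound claimed.

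The main obstacle I anticipate is not any computation but the careful justification of the maximal-inequality step: one must argue that the outage event really coincides with the running-maximum event, and that passing to the positive part preserves both this event (thanks to $\initbudget>0$) and the submartingale property, so that Doob's inequality genuinely applies to a nonnegative process. Once this translation from the surplus process $\budget_t$ to the nonnegative submartingale $\positive{\netsum_t}$ is in place, both bounds follow from standard martingale and moment inequalities.
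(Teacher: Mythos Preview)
Your proposal is correct and follows essentially the same route as the paper: rewrite the outage event as a running-maximum event for the random walk~$\netsum_t$, use the submartingale property (from $\expect{\netclaim_i}>0$) together with Doob's maximal inequality to obtain~$\boundupper_t$, and then pass to the second-moment bound~$\hat{\boundupper}_t$. The only cosmetic differences are that the paper applies Doob directly to the submartingale~$\netsum_t$ (invoking the version whose right-hand side already involves $\expect{\netsum_t^+}$) rather than first forming the nonnegative submartingale $\positive{\netsum_i}$, and it reaches the second bound via $\expect{\netsum_t^+}\leq\expect{|\netsum_t|}$ followed by Jensen, whereas you use Cauchy--Schwarz on $\positive{\netsum_t}$ and then $(\positive{\netsum_t})^2\leq\netsum_t^2$; both chains are equivalent.
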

\begin{proof}
	The proof can be found in \autoref{app:proof-thm-up-bound-outage-prob-fixed}.
\end{proof}

\begin{rem}
	Since we assume that all channel realizations are mutually independent, we can simplify \eqref{eq:bound-upper-loose-ruin-prob-fix} to
	\begin{equation}
		\hat{\boundupper}_t(\initbudget) = \frac{\sqrt{\sum_{i=1}^{t}\var(\netclaim_i)+\left(\sum_{i=1}^{t}\expect{\netclaim_i}\right)^2}}{\initbudget}\,.
	\end{equation}
	If we on top of that assume that the distribution of the channel gains does not vary over time, i.e.,
	$\netclaim_1\simdist\netclaim_2\simdist\cdots{}\simdist\netclaim_i$ for all $i$, 
	we can further simplify the expression to
	\begin{equation}
		\hat{\boundupper}_t(\initbudget) = \frac{\sqrt{t\var(\netclaim_1)+t^2\expect{\netclaim_1}^2}}{\initbudget}\,.
	\end{equation}
\end{rem}

\begin{example}[Rayleigh Fading]\label{ex:fixed-time-rayleigh}
	We now illustrate the reliability analysis with a numerical example.
	For this purpose, we assume that both Bob and Eve experience Rayleigh fading, i.e., the channel gains~$\X_i$ and $\Y_i$ are distributed according to an exponential distribution.
	The channel realizations are \gls{iid} over time and independent in space.
	The average \gls{snr} of Bob's channel is set to $\SI{20}{\dB}$ while Eve's average \gls{snr} is $\SI{10}{\dB}$.
	With these parameters, the average net usage of \gls{sk} bits per time slot is about $\expect{\netclaim_i}=\SI{2.58}{\bit}$.
	In \autoref{fig:outage-prob-rayleigh-fixed-time}, the outage probability~$\probruin_{t}(\initbudget)$ is shown over time for different initial \gls{sk} budgets~$\initbudget$.
	The solid lines indicate the numerically calculated values from \eqref{eq:prob-surv-recursive}, while the markers are obtained from \gls{mc} simulations with $10^6$ samples.
	All of the calculations and simulations can be reproduced using the source code provided in~\cite{BesserGithub}.
	
	As expected from \autoref{cor:prob-ultimate-ruin-deterministic}, all of the outage probabilities approach \num{1} over time.
	However, increasing the initially available number of \gls{sk} bits~$\initbudget$ decreases the outage probability, i.e., the system stays longer in the {active state} for a given outage probability.
	The upper bounds from \autoref{thm:bound-upper-ruin-prob-fix} are not shown in \autoref{fig:outage-prob-rayleigh-fixed-time} because they are loose for this particular example.
	For $\initbudget=50$ at time~$t=15$, the bound is about $\hat{\boundupper}_{15}(50)=0.80$, while both \eqref{eq:prob-surv-recursive} and \gls{mc} simulations yield the actual outage probability of about $\probruin_{15}(50)=0.11$.
	\begin{figure}
		\centering
		\begin{tikzpicture}%
	\begin{axis}[
		betterplot,
		xlabel={Time Step $t$},
		ylabel={Outage Probability $\probruin_{t}(\initbudget)$},
		xmin=1,
		xmax=30,
		ymode=log,
		ymin=1e-7,
		ymax=2,
		legend pos=south east,
		]
		
		\addplot+[no marks, legend image post style={mark=*}] table[x=time,y=ide] {data/ruin-cdf-time-b5.0.dat};
		\addlegendentry{$\initbudget=5$};
		
		\addplot+[no marks, legend image post style={mark=triangle*}] table[x=time,y=ide] {data/ruin-cdf-time-b10.0.dat};
		\addlegendentry{$\initbudget=10$};
		
		\addplot+[no marks, legend image post style={mark=square*}] table[x=time,y=ide] {data/ruin-cdf-time-b20.0.dat};
		\addlegendentry{$\initbudget=20$};
		
		\addplot+[no marks, legend image post style={mark=diamond*}] table[x=time,y=ide] {data/ruin-cdf-time-b50.0.dat};
		\addlegendentry{$\initbudget=50$};

		\pgfplotsset{cycle list shift=-4};
		\addplot+[only marks] table[x=time,y=mc] {data/ruin-cdf-time-b5.0.dat};
		\addplot+[only marks] table[x=time,y=mc] {data/ruin-cdf-time-b10.0.dat};
		\addplot+[only marks] table[x=time,y=mc] {data/ruin-cdf-time-b20.0.dat};
		\addplot+[only marks] table[x=time,y=mc] {data/ruin-cdf-time-b50.0.dat};
	\end{axis}
\end{tikzpicture}
		\caption{%
			Outage probability~$\probruin_{t}$ for different initial budgets~$\initbudget$ over time for a system using the deterministic scheme.
			The channels to both Bob and Eve are Rayleigh fading with average \glspl{snr} $\expect{\X_i}=\SI{20}{\dB}$ and $\expect{\Y_i}=\SI{10}{\dB}$, respectively.
			The solid lines correspond to the numerically calculated outage probabilities according to \eqref{eq:prob-surv-recursive} while the markers indicate results from \gls{mc} simulations with $10^6$ samples (\autoref{ex:fixed-time-rayleigh}).
		}
		\label{fig:outage-prob-rayleigh-fixed-time}
	\end{figure}
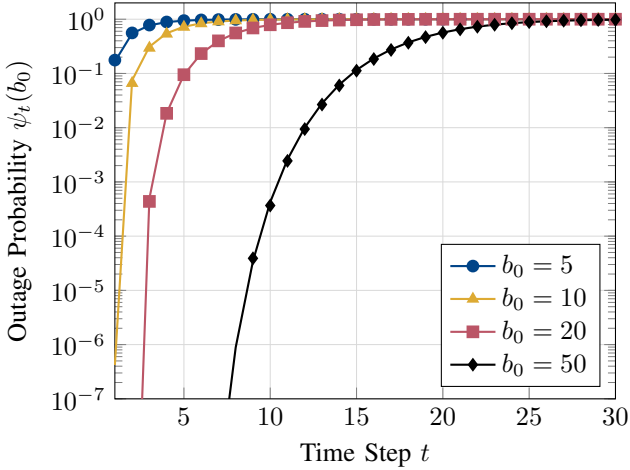
\end{example}

{%
\begin{rem}
	Throughout this work, we use the above Rayleigh fading example to numerically illustrate the theoretical results and show how they can be applied.
	However, it should be noted that these general results can be applied to a variety of scenarios, even including those with an active jammer.
	In this case, the distributions of the channel gains~$\X_i$ and $\Y_i$, and thus the resulting distributions of~$\income_i$ and $\claim_i$ need to be adjusted accordingly.
	The source code for all the presented results and calculations is made freely available at \cite{BesserGithub} and can be adapted to custom numerical examples, e.g., for other types of fading like Rician fading.
\end{rem}
}

\subsection{Latency Analysis}
Having analyzed the reliability of the communication system in the {active state}, we next investigate the latency between two {active states}.
Recall from \autoref{sec:system-model} that we assume that the system enters a {recharge phase} once the \gls{sk} bit budget is exhausted, cf.~\autoref{fig:model-budget-illustration}.
In this phase, only \gls{skg} is performed until a minimum number of available key bits is reached.
The number of time slots this process takes is defined to be the latency~$\latency$ of the system.

The minimum initial budget that needs to be reached depends on the requirements of the application.
In the following, we assume that there is a constraint on the outage probability for a given time slot, i.e., the system must remain in the {active state} for $\survtime$~time slots with an outage probability of at most~$\varepsilon$.
Equivalently, this means that the system should survive at least $\survtime$ time slots with a probability of at least $1-\varepsilon$.
Thus, the minimum initial budget to achieve this is
\begin{equation*}
	\initbudget^{\survtime}(\varepsilon) = \inf \left\{b\geq 0 \;\middle|\; 1-\varepsilon \leq \probsurv_{\survtime}(b)\right\}\,,
\end{equation*}
which corresponds to the inverse of $\probsurv_{\survtime}$, i.e.,
\begin{equation}\label{eq:init-budget-inverse-ruin-prob}
	\initbudget^{\survtime}(\varepsilon) = \inv{\probsurv_{\survtime}}(1-\varepsilon) = \inv{\probruin_{\survtime}}(\varepsilon)\,.
\end{equation}

The operational meaning behind this is that we have to generate at least $\initbudget^{\survtime}(\varepsilon)$ \gls{sk} bits, before we can start transmitting.
Once the \gls{sk} bit budget is exhausted, this process needs to be repeated.
Therefore, this quantity determines the latency before the communication system can securely transmit messages again.

Since the \gls{skg} rates in each time slot vary randomly, the latency~$\latency$ is also a random variable.
The following theorem characterizes the average latency for a system with \gls{iid} channel realizations.

\begin{thm}[{Average Latency}]\label{thm:average-latency-fixed-time}
	Consider the described communication system with a tolerated outage probability~$\varepsilon$ for a specified survival duration~$\survtime$.
	The \gls{skg} rates~$\income_i$ are \gls{iid} with positive and finite expectation~$\expect{\income_1}$.
	The average latency between two {active system states} is given by
	\begin{equation}\label{eq:average-latency-fixed}
		\expect{\latency} = \frac{\initbudget^{\survtime}(\varepsilon)}{\expect{\income_1}}\,.
	\end{equation}
\end{thm}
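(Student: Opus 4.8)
The plan is to recognize the latency as a first-passage (stopping) time of the renewal process generated by the per-slot \gls{skg} income, and then to invoke Wald's identity. Fixing $t_0=1$ as permitted and setting the threshold to $\initbudget=\initbudget^{\survtime}(\varepsilon)$, the definition \eqref{eq:def-latency} becomes $\latency = \inf\{t\geq 0 : \sum_{i=1}^{t}\income_i \geq \initbudget^{\survtime}(\varepsilon)\}$. Since each $\income_i\geq 0$ with $0<\expect{\income_1}<\infty$, the partial sums $\sum_{i=1}^{t}\income_i$ are nondecreasing and drift to $+\infty$, so the threshold is crossed in finite time almost surely, and the event $\{\latency=t\}$ depends only on $\income_1,\dots,\income_t$; hence $\latency$ is a genuine stopping time for the natural filtration. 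First I would record that $\expect{\latency}<\infty$, the integrability hypothesis needed below: this is just finiteness of the renewal function at the finite argument $\initbudget^{\survtime}(\varepsilon)$ (equivalently, the positive mean gives exponential tail control on the number of slots needed to cross a fixed level).

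The core step is Wald's identity. With $\latency$ a stopping time of finite expectation and the $\income_i$ \gls{iid} with finite mean,
\begin{equation*}
	\expect{\sum_{i=1}^{\latency}\income_i} = \expect{\latency}\,\expect{\income_1}\,,
\end{equation*}
so that $\expect{\latency} = \expect*{\sum_{i=1}^{\latency}\income_i}/\expect{\income_1}$. The claimed formula \eqref{eq:average-latency-fixed} therefore reduces entirely to identifying the expected accumulated income at the crossing time with the threshold $\initbudget^{\survtime}(\varepsilon)$.

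The main obstacle is precisely this last identification, namely the \emph{overshoot}: the stopping rule guarantees only $\sum_{i=1}^{\latency}\income_i \geq \initbudget^{\survtime}(\varepsilon)$, with the excess bounded by the final increment $\income_{\latency}$. Wald's identity thus yields, exactly, $\expect{\latency}\expect{\income_1} = \initbudget^{\survtime}(\varepsilon) + \expect{\text{overshoot}}$, and the clean expression in \eqref{eq:average-latency-fixed} follows upon neglecting the nonnegative overshoot, i.e., upon taking $\sum_{i=1}^{\latency}\income_i \approx \initbudget^{\survtime}(\varepsilon)$. I would justify this either as the natural first-order result — it becomes exact asymptotically, since by the elementary renewal theorem $\expect{\latency}/\initbudget^{\survtime}(\varepsilon)\to 1/\expect{\income_1}$ as the threshold grows — or, if an exact equality is intended, as the consequence of the implicit modeling assumption that the budget threshold is met without excess (reasonable when a single increment $\income_i$ is small relative to $\initbudget^{\survtime}(\varepsilon)$). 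The delicate point, if one wanted a fully rigorous equality, is that the expected last increment $\expect{\income_{\latency}}$ is a size-biased quantity and is \emph{not} simply $\expect{\income_1}$, so controlling it cannot be done by a naïve linearity argument.
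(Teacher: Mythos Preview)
Your Wald-identity argument is essentially the paper's proof. The paper phrases it as an application of additive-drift hitting-time results for the walk $\rv{A}_t = \initbudget - \sum_{i=1}^{t}\income_i$, but the drift theorem invoked there is precisely Wald's identity (equivalently, optional stopping for the compensated walk), so the two routes coincide. Your discussion of the overshoot is well placed and applies equally to the paper's version: both arguments in fact deliver $\expect{\latency}\,\expect{\income_1} = \initbudget - \expect{\rv{A}_\latency}$ with $\rv{A}_\latency\leq 0$, and the paper simply records the outcome as the exact equality $\expect{\latency}=\initbudget/\expect{\income_1}$ without commenting on the residual term. So the overshoot caveat is not a gap in your proof relative to the paper's; it is a shared approximation, and your treatment of it (asymptotic exactness via the elementary renewal theorem, negligibility of a single increment compared to the threshold) is more explicit than the source.
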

\begin{proof}
	The proof can be found in \autoref{app:proof-thm-avg-latency}.
\end{proof}
\begin{rem}
	The (average) latency in \eqref{eq:average-latency-fixed} is defined in terms of independent channel realizations of the main channel.
	According to the timing model in \autoref{fig:model-fixed-time-scheme}, two independent channel realizations form one time slot.
	Thus, the latency in terms of time slots is half of that in \eqref{eq:average-latency-fixed}.
\end{rem}
\begin{rem}
	Up to the last step of the proof of \autoref{thm:average-latency-fixed-time}, we do not require $\initbudget$ to be constant.
	Therefore, the result only needs to be slightly modified in the case where $\initbudget$ is a random variable.
	In this case, we get $\expect{\latency} = \expect{\expect{\latency \;\middle|\; \initbudget}} = \expect{\initbudget}/\expect{\income_1}$.
\end{rem}

\begin{example}[Reliability-Latency Tradeoff]\label{ex:fixed-time-init-budget}
	We illustrate the result of the latency analysis with the following numerical example.
	Similar to \autoref{ex:fixed-time-rayleigh}, we assume \gls{iid} Rayleigh fading with average \glspl{snr} of $\expect{\X_i}=\SI{20}{\dB}$ and $\expect{\Y_i}=\SI{10}{\dB}$ for Bob's and Eve's channel gains, respectively.
	For this scenario, the minimum required initial budget~$\initbudget^{\survtime}(\varepsilon)$ is depicted in \autoref{fig:minimum-budget-latency-fixed} over the outage probability constraint~$\varepsilon$ for multiple time constraints~$\survtime$.
	The solid lines again show the numerically calculated values by the recursive relation in \eqref{eq:prob-surv-recursive}, while the markers indicate the results of \gls{mc} simulations with $10^6$ samples.
	
	The first clear observation is that the required initial budget drops to zero as the tolerated outage probability approaches one.
	Similarly, it increases with a stricter reliability constraint.
	Interestingly, for the chosen example of Rayleigh fading, the required $\initbudget^{\survtime}(\varepsilon)$ increases only slowly when decreasing~$\varepsilon$ at a fixed~$\survtime$.
	For $\survtime=5$, an initial budget of around $\initbudget^{5}(10^{-1})=\SI{19.9}{\bit}$ is required to survive with an outage probability less than $10^{-1}$.
	For a stricter requirement of $10^{-5}$, this increases to around $\initbudget^{5}(10^{-5})=\SI{33.3}{\bit}$, i.e., for a \num{10000}-times increase in reliability, the initial budget only needs to be increased by a factor of around \num{1.67}.
	
	In contrast, when increasing the duration~$\survtime$ that the system should remain in the {active state}, the increase in $\initbudget^{\survtime}(\varepsilon)$ is more significant.
	In order to survive $\survtime=10$ time slots with an outage probability of at most $\varepsilon=10^{-1}$, the required initial budget is around $\initbudget^{10}(10^{-1})=\SI{35.8}{\bit}$, which is an \num{1.8}-times increase compared to $\survtime=5$ with the same outage probability constraint.
	
	According to \autoref{thm:average-latency-fixed-time}, the average latency~$\expect{\latency}$ is a simple linear scaling of the required initial budget~$\initbudget^{\survtime}(\varepsilon)$.
	For this numerical example, we have an average \gls{skg} rate around $\expect{\income_1}=\SI{3.31}{\bit}$.
	Thus, we have an average latency of around $\expect{\latency}=6$~time slots for $\survtime=5$ and $\varepsilon=10^{{-1}}$.
	Similarly, this increases to around \num{10}~time slots for $\varepsilon=10^{-5}$.
	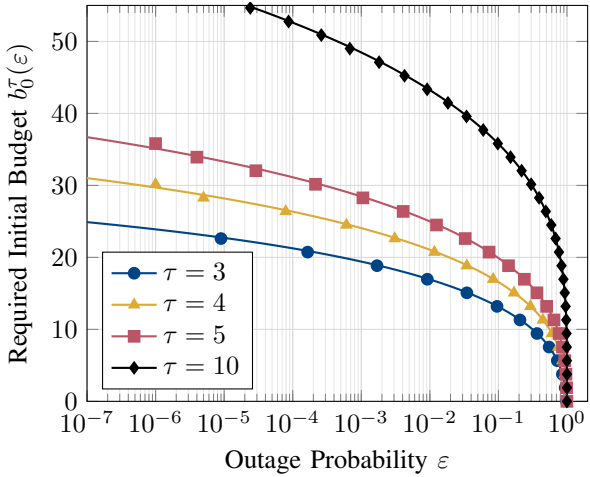
\begin{figure}
		\centering
		\begin{tikzpicture}%
	\begin{axis}[
		betterplot,
		ylabel={Required Initial Budget $\initbudget^{\survtime}(\varepsilon)$},
		xlabel={Outage Probability $\varepsilon$},
		ymin=0,
		ymax=55,
		xmode=log,
		xmin=1e-7,
		xmax=2,
		legend pos=south west,
		]
		
		\addplot+[only marks, mark repeat=5, forget plot] table[y=budget,x=t3] {data/ruin-cdf-budget-mc.dat};
		\addplot+[no marks, legend image post style={mark=*}] table[y=budget,x=t3] {data/ruin-cdf-budget-th.dat};
		\addlegendentry{$\tau=3$};
		
		\addplot+[only marks, mark repeat=5, forget plot] table[y=budget,x=t4] {data/ruin-cdf-budget-mc.dat};
		\addplot+[no marks, legend image post style={mark=triangle*}] table[y=budget,x=t4] {data/ruin-cdf-budget-th.dat};
		\addlegendentry{$\tau=4$};
		
		\addplot+[only marks, mark repeat=5, forget plot] table[y=budget,x=t5] {data/ruin-cdf-budget-mc.dat};
		\addplot+[no marks, legend image post style={mark=square*}] table[y=budget,x=t5] {data/ruin-cdf-budget-th.dat};
		\addlegendentry{$\tau=5$};
		
		\addplot+[only marks, mark repeat=5, forget plot] table[y=budget,x=t10] {data/ruin-cdf-budget-mc.dat};
		\addplot+[no marks, legend image post style={mark=diamond*}] table[y=budget,x=t10] {data/ruin-cdf-budget-th.dat};
		\addlegendentry{$\tau=10$};
	\end{axis}
\end{tikzpicture}
		\caption{%
			Required initial budget~$\initbudget^{\survtime}$ over the outage probability~$\varepsilon$ for a system in the {recharge state}.
			The channels to both Bob and Eve are Rayleigh fading with average \glspl{snr} $\expect{\X_i}=\SI{20}{\dB}$ and $\expect{\Y_i}=\SI{10}{\dB}$, respectively.
			The solid lines correspond to the numerically calculated outage probabilities according to \eqref{eq:init-budget-inverse-ruin-prob} while the markers indicate results from \gls{mc} simulations with $10^6$ samples	(\autoref{ex:fixed-time-init-budget}).
		}
		\label{fig:minimum-budget-latency-fixed}
	\end{figure}
\end{example}
\section{Random Transmission Times}\label{sec:random-arrival-time-scheme}
After having introduced and analyzed a deterministic timing scheme in the previous section, we now consider a different scheme in which messages arrive randomly.

\subsection{Random Transmission -- Model Description}
In each time slot~$t$, there is a probability~$\probtx_t$ that a message needs to be securely transmitted.
{An example where this timing model is applicable is a communication scenario between a sensor and a fusion center where the sensor only transmits data when an external event occurs, e.g., when the temperature rises above a threshold.
Assuming that these external events occur randomly, the times at which data is transmitted are also random.
By adjusting the threshold at which the sensor transmits data, the system designer could influence the transmission probability~$\probtx_t$.}

When a message needs to be transmitted, available \gls{sk} bits are used to encrypt the data.
Similar to the scheme in \autoref{sec:fixed-time-scheme}, this removes $\claim_t$~bits from the budget of available key bits.
However, if no transmission occurs, the time slot is used for \gls{skg}, which adds $\income_t$ bits to the budget.
An illustration can be found in \autoref{fig:model-random-time-scheme}.
Overall, the number of available \gls{sk} bits~$\budget_t$ at time~$t$ is again given by
\begin{equation}\label{eq:def-budget-random}
	\budget_t = \initbudget - \sum_{i=1}^{t} \netclaim_i = \initbudget - \netsum_{t}\,,
\end{equation}
where $\netclaim_i$ again denotes the net usage of key bits in time slot~$i$.
While this looks identical to the budget expression for the deterministic scheme in \eqref{eq:def-budget-fixed}, the distribution of the net usage~$\netclaim_i$ is different.
In contrast to the deterministic scheme, we now express $\netclaim_i$ as
\begin{equation}\label{eq:def-net-claim-random}
	\netclaim_i = \txindicator_i \claim_i - (1-\txindicator_i) \income_i\,,
\end{equation}
where $\txindicator_i\sim\berndist(\probtx_i)$ is an independent Bernoulli-distributed random variable indicating whether time slot~$i$ is used to transmit a message or perform \gls{skg}.
The distribution of $\netclaim_i$ then follows as
\begin{align}
	F_{\netclaim_i}(z)
	&= \Pr\left(\txindicator_i\claim_i  - (1-\txindicator_i)\income_i \leq z\right)\notag\\
	&= (1-\probtx_i)\Pr\left(-\income_i\leq z\right) + \probtx_i \Pr\left(\claim_i \leq z\right)\notag\\
	&= (1-\probtx_i)\bar{F}_{\income_i}(-z) + \probtx_i F_{\claim_i}(z)\,,
	\label{eq:cdf-net-claim-random}
\end{align}
where $\bar{F}_{\income_i}=1-F_{\income_i}$ denotes the survival function of~$\income_i$.
As before, we will assume throughout the following, that all involved random variables are mutually independent and identically distributed over time.

\begin{figure}
	\centering
	\begin{tikzpicture}%
	\begin{axis}[
		scale only axis,
		width=.95\linewidth,
		height=.13\textheight,
		axis lines=middle,
		xlabel={Time Slot~$t$},
		xlabel near ticks,
		x tick label as interval,
		ylabel={},
		yticklabels={},
		y axis line style={draw=none},
		ymajorticks=false,
		xmin=0.8,
		xmax=10.2,
		xtick distance=1,
		ymin=0,
		ymax=1.2,
		axis on top,
		]
		\draw[plot2,thick,fill=plot2,fill opacity=.3] foreach \n in {1, ..., 4, 6, 7, 9}{(\n,0) rectangle (\n+1,1)};
		\draw foreach \n in {1, ..., 4, 6, 7, 9}{(\n+0.5, 1) node[anchor=south] {SKG}};
		
		\draw[black,thick,dashed] foreach \n in {1, ..., 15}{(\n, 0) -- (\n, 2)};
		
		\draw[plot3,thick,fill=plot3,fill opacity=.3] foreach \n in {5, 8, 10}{(\n,0) rectangle (\n+1,1)};
		\draw foreach \n in {5,8,10}{(\n+0.5, 1) node[anchor=south] {TX}};
	\end{axis}
\end{tikzpicture}
	\caption{%
		Illustration of the scheduling model with random \gls{tx} blocks {during the active system state}.
		In each time slot~$t$, there is a probability of~$\probtx_t$ that a message is transmitted.
		If no message is transmitted, \gls{skg} is performed instead.
	}
	\label{fig:model-random-time-scheme}
\end{figure}
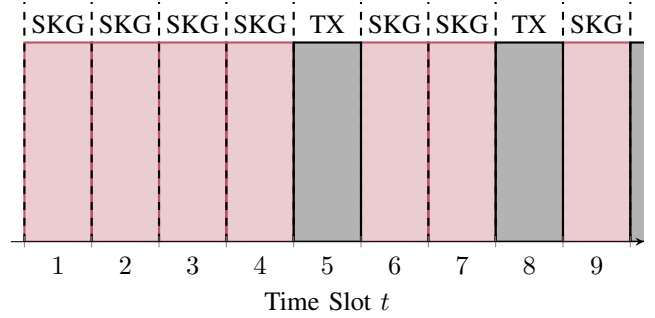

\begin{rem}
	For the expression of $\netclaim_i$ in \eqref{eq:def-net-claim-random} recall that $\income_i$ denotes the (non-negative) \gls{skg} rate.
	Since the generated key bits are added to the budget, the income~$\income_i$ is a negative usage.
\end{rem}
\begin{rem}
	The distribution of~$\netclaim_i$ in \eqref{eq:cdf-net-claim-random} has the form of a mixture distribution where $\txindicator_i$ is a mixture of $\claim_i$ and $-\income_i$.
\end{rem}

\subsection{Reliability Analysis}
Even though the distribution of~$\netclaim_i$ is different from that in \autoref{sec:fixed-time-scheme}, the progress of the budget~$\budget_t$ over time is still given by a random walk, cf.~\eqref{eq:def-budget-random}.
Thus, we can reuse the results on the outage probability~$\probruin_{t}$ from \autoref{sub:reliability-fixed} by adjusting the distribution of~$\netclaim_i$.
However, there exists a major qualitative difference when switching to the random time scheme.

In the deterministic scheme, the average net usage of \gls{sk} bits is greater than zero, i.e., $\expect{\netclaim_i}>0$, cf.~\autoref{lem:expected-net-claim}.
Therefore, on average, the system loses \gls{sk} bits, which results in an eventual outage, cf.~\autoref{cor:prob-ultimate-ruin-deterministic}.
In contrast, in the random scheme, it is possible for $\expect{\netclaim_i}$ to be negative, as shown in the following.

\begin{cor}[{Average Net Usage -- Random Scheme}]\label{cor:expected-net-claim-random}
	Consider the described communication system in the {active state} where a message is transmitted with probability~$\probtx_t$ in time slot~$t$.
	The following relation between the expected value of the net usage~$\netclaim_i$ and the transmission probability~$\probtx_i$ holds
	\begin{equation}\label{eq:relation-tx-prob-neg-expect-net-claim-random}
		\expect{\netclaim_i}\gtreqqless 0 \quad \Leftrightarrow \quad \probtx_i \gtreqqless \frac{\expect{\income_i}}{\expect{\income_i} + \expect{\claim_i}}\,.
	\end{equation}
\end{cor}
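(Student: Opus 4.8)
The plan is to compute $\expect{\netclaim_i}$ directly from the mixture representation in \eqref{eq:def-net-claim-random} and then rearrange the sign condition into a threshold on the transmission probability~$\probtx_i$. Starting from $\netclaim_i = \txindicator_i \claim_i - (1-\txindicator_i)\income_i$, I would apply linearity of expectation. Because the Bernoulli indicator~$\txindicator_i$ is assumed independent of both the claim~$\claim_i$ and the income~$\income_i$, each product factors, giving
\begin{equation*}
	\expect{\netclaim_i} = \expect{\txindicator_i}\expect{\claim_i} - \expect{1-\txindicator_i}\expect{\income_i} = \probtx_i\expect{\claim_i} - (1-\probtx_i)\expect{\income_i}\,.
\end{equation*}

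Next I would impose the three-way relation $\expect{\netclaim_i} \gtreqqless 0$, which is equivalent to $\probtx_i\expect{\claim_i} \gtreqqless (1-\probtx_i)\expect{\income_i}$. Collecting the $\probtx_i$ terms on the left yields $\probtx_i\bigl(\expect{\claim_i}+\expect{\income_i}\bigr) \gtreqqless \expect{\income_i}$, and dividing through by the sum produces exactly the claimed threshold in \eqref{eq:relation-tx-prob-neg-expect-net-claim-random}.

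The only point requiring care -- and the nearest thing to an obstacle in an otherwise routine calculation -- is justifying that dividing by $\expect{\claim_i}+\expect{\income_i}$ preserves the direction of the inequality rather than flipping it. This is immediate because both $\expect{\income_i}$ and $\expect{\claim_i}$ are expectations of nonnegative rates (recall that $\income_i$ is the \gls{skg} rate and $\claim_i$ the transmission rate), so their sum is strictly positive; division by a positive quantity preserves each branch of the $\gtreqqless$ relation simultaneously. Strict positivity of the denominator also ensures the stated threshold is a well-defined number lying in $(0,1)$, which is what makes the characterization meaningful as a condition on a probability.
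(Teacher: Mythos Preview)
Your proposal is correct and follows essentially the same route as the paper: compute $\expect{\netclaim_i}$ by linearity and independence to obtain $\probtx_i\expect{\claim_i}-(1-\probtx_i)\expect{\income_i}$, then rearrange the sign condition into the stated threshold. Your added remark on why division by $\expect{\income_i}+\expect{\claim_i}$ preserves the $\gtreqqless$ direction is a detail the paper leaves implicit, so if anything your write-up is slightly more careful.
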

\begin{proof}
	The expected value of the net usage in time slot~$i$ is given as
	\begin{align*}
		\expect{\netclaim_i}
		&= \expect{\txindicator_i\claim_i} - \expect{(1-\txindicator_i)\income_i}\\
		&= \expect{\txindicator_i}\expect{\claim_i} - \expect{1-\txindicator_i}\expect{\income_i}\\
		&= \probtx_i\expect{\claim_i} - (1-\probtx_i)\expect{\income_i}
		\,,
	\end{align*}
	where we use the assumption that $\txindicator_i$ is independent from both $\income_i$ and $\claim_i$.
	This yields the relation
	\begin{equation*}
		\expect{\netclaim_i}\gtreqqless 0 \quad \Leftrightarrow \quad \probtx_i \gtreqqless \frac{\expect{\income_i}}{\expect{\income_i} + \expect{\claim_i}}\,,
	\end{equation*}
	which concludes the proof.
\end{proof}

The important consequence of \autoref{cor:expected-net-claim-random} is that, depending on~$\probtx_i$, the average net usage of \gls{sk} bits may be negative, i.e., on average more bits are generated than used in each time slot.
In contrast to the result in \autoref{cor:prob-ultimate-ruin-deterministic} for the deterministic scheme, this implies that there is a non-zero chance for the system to remain in the {active state} indefinitely.

Based on the calculation of the outage probability~$\probruin_{t}$ within a finite time horizon~$t$ in \eqref{eq:prob-surv-recursive}, we can calculate the probability of ultimate ruin~$\probruin = \lim\limits_{t\to\infty}\probruin_{t}$ according to the integral equation
\begin{equation}\label{eq:prob-ultimate-ruin-random-integral}
	\probruin(b) = (1-F_{\netclaim}(b)) + \int_{0}^{\infty} \probruin(s) f_{\netclaim}(b-s)\diff{s}
\end{equation}
where we assume \gls{iid} $\netclaim_i$ with density~$f_{\netclaim}$.
This is a Fredholm integral equation of the second kind, which can be solved numerically, e.g., by applying Nyström's method~\cite[Chap.~19.1]{Press2007recipes}.
For the numerical examples presented in the following, this technique is used to approximate the exact solution.

While the relation from \eqref{eq:prob-ultimate-ruin-random-integral} allows an exact calculation of the outage probability~$\probruin$ in theory, a solution may be difficult to obtain in practice.
Therefore, we provide a worst-case bound in the following theorem, which may be easier to compute in practice.

\begin{thm}[{Upper Bound of Probability of Ultimate Ruin -- Random Scheme}]\label{thm:prob-ultimate-ruin-random}
	Consider the described communication scheme in the {active state} with \gls{iid} \gls{skg} rates~$\income_i$ and \gls{iid} rates to Bob~$\claim_i$.
	In time slot~$i$, a message is transmitted with probability~$\probtx_i$.
	This probability is the same for all time slots~$i$ and fulfills
	\begin{equation}
		\probtx_1 = \cdots{} = \probtx_i = \probtx < \frac{\expect{\income_1}}{\expect{\income_1} + \expect{\claim_1}}\,.
	\end{equation}
	In this case, the probability of eventually leaving the {active state}~$\probruin$ is upper bounded by
	\begin{equation}\label{eq:bound-prob-ult-ruin-random-lundberg}
		\probruin(\initbudget) \leq \exp(-\adjcoeff^\star\initbudget)
	\end{equation}
	with $\adjcoeff^\star$ being the positive solution to
	\begin{equation}\label{eq:def-adj-coeff-ult-ruin-prob-random}
		\expect{\exp\left(\adjcoeff^\star \netclaim_1\right)} = 1\,,
	\end{equation}
	assuming it exists.
\end{thm}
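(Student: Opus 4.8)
The plan is to use the classical martingale (exponential change-of-measure) argument underlying the Lundberg inequality, adapted to our discrete-time random walk. First I would introduce the outage time
\begin{equation*}
	\survtime = \inf\left\{t \geq 1 \;\middle|\; \budget_t \leq 0\right\} = \inf\left\{t \geq 1 \;\middle|\; \netsum_t \geq \initbudget\right\}\,,
\end{equation*}
so that the probability of ultimate ruin is $\probruin(\initbudget) = \Pr(\survtime < \infty)$. Under the stated hypothesis $\probtx < \expect{\income_1}/(\expect{\income_1}+\expect{\claim_1})$, \autoref{cor:expected-net-claim-random} gives $\expect{\netclaim_1} < 0$, i.e., the walk $\netsum_t$ has negative drift, which is exactly the regime in which the level~$\initbudget$ may never be crossed.

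The central object is the exponential process
\begin{equation*}
	M_t = \exp\left(\adjcoeff^\star \netsum_t\right) = \prod_{i=1}^{t} \exp\left(\adjcoeff^\star \netclaim_i\right)\,, \qquad M_0 = 1\,.
\end{equation*}
I would verify that $(M_t)$ is a martingale: since the $\netclaim_i$ are \gls{iid} and $\adjcoeff^\star$ satisfies the defining equation $\expect{\exp(\adjcoeff^\star \netclaim_1)} = 1$ from \eqref{eq:def-adj-coeff-ult-ruin-prob-random}, conditioning on $\netsum_t$ and using independence of $\netclaim_{t+1}$ yields $\expect{M_{t+1} \mid \netsum_t} = M_t\,\expect{\exp(\adjcoeff^\star \netclaim_{t+1})} = M_t$. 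Thus each multiplicative increment has unit mean and $M_t$ is a nonnegative martingale started at $1$.

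Next I would apply the optional stopping theorem to the bounded stopping time $\survtime \wedge t$, which gives $\expect{M_{\survtime \wedge t}} = M_0 = 1$. Splitting on whether ruin has occurred by time~$t$,
\begin{equation*}
	1 = \expect{M_{\survtime}\,\one_{\{\survtime \leq t\}}} + \expect{M_{t}\,\one_{\{\survtime > t\}}} \geq \expect{M_{\survtime}\,\one_{\{\survtime \leq t\}}}\,,
\end{equation*}
where the surviving term is dropped because $M_t > 0$. On the event $\{\survtime \leq t\}$ the walk has crossed the level, so $\netsum_{\survtime} \geq \initbudget$ and, since $\adjcoeff^\star > 0$, $M_{\survtime} \geq \exp(\adjcoeff^\star \initbudget)$. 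Hence $1 \geq \exp(\adjcoeff^\star \initbudget)\,\Pr(\survtime \leq t)$, and letting $t \to \infty$ (monotone convergence of $\Pr(\survtime \leq t)$ to $\Pr(\survtime < \infty)$) yields $\probruin(\initbudget) \leq \exp(-\adjcoeff^\star \initbudget)$.

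The main obstacle is not the computation but the two structural points that make the argument rigorous. The first is the justification of optional stopping: I avoid any integrability technicalities by stopping at the \emph{bounded} time $\survtime \wedge t$ and only passing to the limit afterwards, so no uniform-integrability hypothesis on $M_{\survtime}$ is required. The second is the existence of a positive root $\adjcoeff^\star$, which the statement assumes; qualitatively, $\adjcoeff \mapsto \expect{\exp(\adjcoeff\,\netclaim_1)}$ equals~$1$ at $\adjcoeff=0$ with negative slope $\expect{\netclaim_1}<0$ there, so it initially dips below~$1$, and a positive root exists precisely when $\netclaim_1$ carries enough positive mass (a right tail from the claim term $\probtx\claim_i$) and a finite moment generating function on some interval $\adjcoeff>0$. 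Finally, I would note that the result is an inequality rather than an equality solely because of the overshoot $\netsum_{\survtime} - \initbudget \geq 0$ at the ruin epoch, which we lower-bounded by zero.
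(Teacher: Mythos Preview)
Your proposal is correct and follows essentially the same route as the paper: both construct the exponential martingale $\exp(\adjcoeff^\star\netsum_t)$, apply optional stopping at a bounded time ($\survtime\wedge t$ in your write-up, $\tau_m=\inf\{t:\netsum_t\geq\initbudget\}\wedge m$ in the paper), lower-bound $M_\survtime\geq\exp(\adjcoeff^\star\initbudget)$ on the ruin event, and pass to the limit. The only cosmetic difference is that the paper first introduces the family $\rv{A}_t^{\adjcoeff}=\exp(\adjcoeff\netsum_t - t\,g(\adjcoeff))$ with $g(\adjcoeff)=\log\expect{\exp(\adjcoeff\netclaim_1)}$ and then specializes to $\adjcoeff=\adjcoeff^\star$ (where $g(\adjcoeff^\star)=0$), whereas you work directly at $\adjcoeff^\star$ from the start; the paper also justifies $\adjcoeff^\star>0$ via Jensen's inequality rather than your slope argument, but the content is the same.
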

\begin{proof}
	The proof can be found in \autoref{app:proof-thm-up-bound-ult-outage-prob-random}.
\end{proof}

The upper bound from \eqref{eq:bound-prob-ult-ruin-random-lundberg} shows that there is a non-zero probability of staying in the {active system state} indefinitely if the expected net usage of \gls{sk} bits $\expect{\netclaim_i}$ is negative, i.e., more bits are generated than used on average.

\begin{rem}
	The above results have direct implications for practical system design.
	If the system designer can influence the transmission probability~$\probtx$, its value should be set below the critical value $\expect{\income_1}/(\expect{\income_1}+\expect{\claim_1})$.
	Similar to \autoref{sec:fixed-time-scheme}, we can then calculate the minimum required initial budget~$\initbudget^{\infty}(\varepsilon)$ for a tolerated outage probability~$\varepsilon$.
	Thus, with probability~$\varepsilon$ there will be only the initial latency to generate $\initbudget^{\infty}(\varepsilon)$ key bits and no subsequent {recharging phases}.
\end{rem}

\begin{example}[{Rayleigh Fading -- Random Scheme}]\label{ex:rayleigh-fading-random}
	For this numerical example, we assume the same system parameters as in the previous \autoref{ex:fixed-time-rayleigh} and \autoref{ex:fixed-time-init-budget}, i.e., \gls{iid} Rayleigh fading with $\expect{\X_i}=\SI{20}{\dB}$ and $\expect{\Y_i}=\SI{10}{\dB}$.
	This gives us the expected values of the \gls{skg} rates~$\income_i$ and Bob's rates~$\claim_i$ as $\expect{\income_i}=\SI{3.31}{\bit}$ and $\expect{\claim_i}=\SI{5.889}{\bit}$, respectively.
	From \eqref{eq:relation-tx-prob-neg-expect-net-claim-random} in \autoref{cor:expected-net-claim-random}, the critical transmission probability is calculated to be $3.31/9.199=0.360$, i.e., for transmission probabilities~$\probtx<0.360$ the average net usage of \gls{sk} bits is negative.
	According to \autoref{thm:prob-ultimate-ruin-random}, this implies that the outage probability for such a scenario approaches a value less than \num{1} over time.
	
	The numerical results are presented in \autoref{fig:outage-prob-rayleigh-random}.
	First, we show the outage probability~$\probruin_{t}$ in finite time for two values of the transmission probability~$\probtx<0.36$ with an initial budget of $\initbudget=\SI{20}{\bit}$ in \autoref{fig:outage-prob-time-random}.
	The probabilities of ultimate ruin~$\probruin$, which are indicated by the dashed lines, are numerically evaluated according to \eqref{eq:prob-ultimate-ruin-random-integral}.
	Additionally, the upper bound from \eqref{eq:bound-prob-ult-ruin-random-lundberg} in \autoref{thm:prob-ultimate-ruin-random} is given by the dash-dotted lines.
	It can be seen that the outage probability~$\probruin_t$ approaches the limit~$\probruin$ for large~$t$.
	In the case of $p=0.1$, this limit is around $\probruin=1.45\cdot10^{-3}$ and the upper bound is around $3.53\cdot{}10^{-3}$.
	As expected, these values increase with an increasing transmission probability~$\probtx$, since more blocks are used for transmission which uses up more of the available \gls{sk} bits.
	For $p=0.35$, the probability of eventually exhausting the key-bit budget is $\probruin=0.64$, i.e., there is a $\SI{36}{\percent}$ chance that the system will remain in the {active state} indefinitely.
	
	Next, we show the behavior of the probability of ultimate ruin~$\probruin$ over the initially available number of \gls{sk} bits~$\initbudget$ in \autoref{fig:outage-prob-budget-random}.
	As expected, this probability decreases with an increasing initial budget and the slope is steeper for smaller transmission probabilities~$\probtx$.
	\begin{figure}
		\centering
		\subfloat[{Comparison of outage probability in finite time~$\probruin_{t}$ with the probability of ultimate ruin~$\probruin$ for an initial budget~$\initbudget=\SI{20}{\bit}$.}]{%
			\begin{tikzpicture}%
	\begin{axis}[
		betterplot,
		xlabel={Time Step $t$},
		ylabel={Outage Probability $\probruin_t(\initbudget)$},
		xmin=1,
		xmax=30,
		ymode=log,
		ymin=1e-5,
		ymax=2,
		legend pos=south east,
		]
		
		\addplot+ table[x=time,y=mc] {data/ult-ruin-prob-time-b20.0-p0.1.dat};
		\addlegendentry{$\probtx=0.1$};
		\addplot+ table[x=time,y=mc] {data/ult-ruin-prob-time-b20.0-p0.35.dat};
		\addlegendentry{$\probtx=0.35$};
		
		\pgfplotsset{cycle list shift=-2};
		\addplot+[no marks, dashed, very thick] table[x=time,y=th] {data/ult-ruin-prob-time-b20.0-p0.1.dat};
		\addplot+[no marks, dashed, very thick] table[x=time,y=th] {data/ult-ruin-prob-time-b20.0-p0.35.dat};
		
		\pgfplotsset{cycle list shift=-4};
		\addplot+[no marks, dashdotted, very thick] table[x=time,y=up] {data/ult-ruin-prob-time-b20.0-p0.1.dat};
		\addplot+[no marks, dashdotted, very thick] table[x=time,y=up] {data/ult-ruin-prob-time-b20.0-p0.35.dat};
	\end{axis}
\end{tikzpicture}
			\label{fig:outage-prob-time-random}
		}

		\subfloat[{Probability of ultimate ruin~$\probruin$ over the initial budget~$\initbudget$. The \gls{mc} results correspond to the values after $t=150$ time steps.}]{%
			\begin{tikzpicture}%
	\begin{axis}[
		betterplot,
		xlabel={Initial Budget~$\initbudget$},
		ylabel={Outage Probability $\probruin(\initbudget)$},
		xmin=1,
		xmax=30,
		ymode=log,
		ymin=1e-5,
		ymax=2,
		legend pos=south west,
		]
		
		\addplot+[no marks, dashed, legend image post style={mark=*}, very thick] table[x=budgetth,y=th] {data/ult-ruin-prob-budget-p0.1.dat};
		\addlegendentry{$\probtx=0.1$};
		\addplot+[no marks, dashed, legend image post style={mark=triangle*}, very thick] table[x=budgetth,y=th] {data/ult-ruin-prob-budget-p0.35.dat};
		\addlegendentry{$\probtx=0.35$};
		
		\pgfplotsset{cycle list shift=-2};
		\addplot+[only marks, mark repeat=4] table[x=budgetmc,y=mc] {data/ult-ruin-prob-budget-p0.1.dat};
		\addplot+[only marks, mark repeat=4] table[x=budgetmc,y=mc] {data/ult-ruin-prob-budget-p0.35.dat};
		
		\pgfplotsset{cycle list shift=-4};
		\addplot+[no marks, dashdotted, very thick] table[x=budgetup,y=up] {data/ult-ruin-prob-budget-p0.1.dat};
		\addplot+[no marks, dashdotted, very thick] table[x=budgetup,y=up] {data/ult-ruin-prob-budget-p0.35.dat};
	\end{axis}
\end{tikzpicture}
			\label{fig:outage-prob-budget-random}
		}
		\caption{%
			Probability of ultimate ruin~$\probruin$ for a system with transmission probability~$\probtx$.
			The channels to both Bob and Eve are Rayleigh fading with average \glspl{snr} $\expect{\X_i}=\SI{20}{\dB}$ and $\expect{\Y_i}=\SI{10}{\dB}$, respectively.
			The dashed lines correspond to the numerically calculated outage probabilities~$\probruin$ according to \eqref{eq:prob-ultimate-ruin-random-integral} while the markers indicate results from \gls{mc} simulations with $10^6$ samples.
			The dash-dotted lines represent the upper bound from \eqref{eq:bound-prob-ult-ruin-random-lundberg}.
			(\autoref{ex:rayleigh-fading-random})
		}
		\label{fig:outage-prob-rayleigh-random}
	\end{figure}
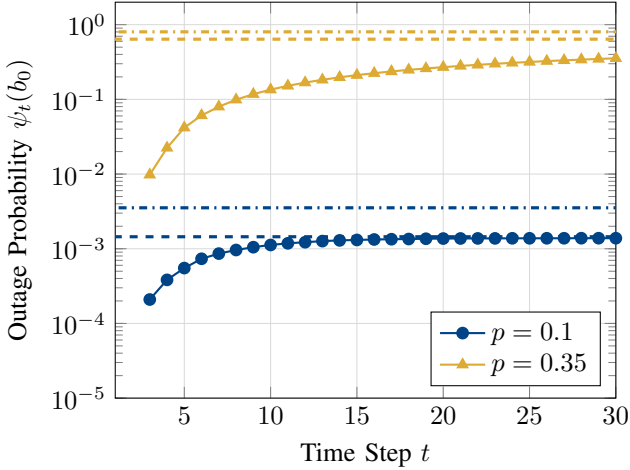
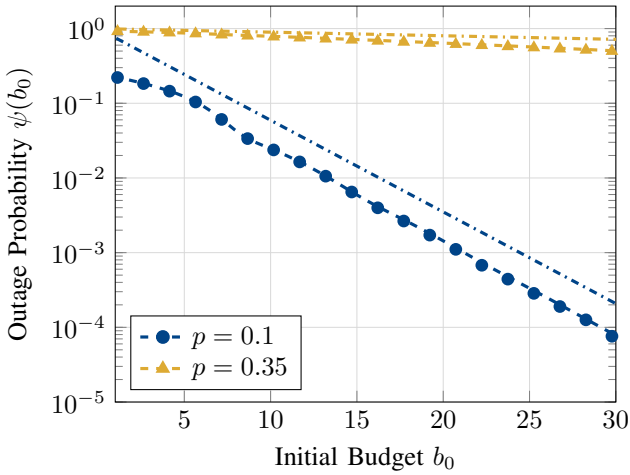
\end{example}
\section{Conclusion}\label{sec:conclusion}

In this work, we have considered a wireless communication system with a passive eavesdropper.
Alice and Bob perform \gls{skg} to generate key bits, which are added to a pool of available key bits.
When transmitting a message, bits from this pool are used as a one-time pad to secure the data transmission.
In this setting, we have analyzed the reliability in terms of the probability that the budget of available key bits will be exhausted.
We have shown how to compute this outage probability numerically and, additionally, derive worst-case bounds.
Interestingly, for randomly arriving messages, there is a positive probability that the communication system will remain active indefinitely and never run out of \gls{sk} bits.
In this case, there is only an initial latency to reach the initial budget of key bits before a transmission can take place.

For the cases where the system eventually runs out of available key bits, we have additionally investigated the latency between two active states, i.e., the duration during which Alice and Bob only generate new keys without performing any data transmission.
It is shown that the expected latency can be computed by a simple expression that is linear in the number of \gls{sk} bits that need to be initially available.

{System designers can utilize the findings of this work to adjust parameters, meeting specific performance requirements, e.g., the outage probability after a specified time.}

Since calculating the exact outage probability requires solving an integrodifference equation recursively, it will be interesting for future work to find approximations or tight bounds that are less computationally expensive.
In addition, the assumption that the channel distributions are stationary could be removed in future work.
In this way, a scenario with time-varying channel distributions could also be modeled and analyzed.

\appendices
\section{Proof of \autoref{lem:expected-net-claim}}\label{app:proof-lem-avg-net-usage-fixed}
Based on the definitions of $\netclaim_i$, $\claim_i$, and $\income_i$, we find that
\begin{align*}
	\expect{\netclaim_i}
	&= \expect{\claim_i - \income_i}\\
	\begin{split}
		&= \expect{\log_2\left(1+\Y_i\right)} + \expect{\log_2\left(1+\Xt_i\right)} \\&\hspace*{10em} - \expect{\log_2\left(1+\X_i+\Y_i\right)}
	\end{split}\\
	\begin{split}
		&= \expect{\log_2\left((1+\Y_i)(1+\Xt_i)\right)} \\&\hspace*{10em} - \expect{\log_2\left(1+\X_i+\Y_i\right)}
	\end{split}\\
	\begin{split}
		&= \expect{\log_2\left(1+\Xt_i+\Y_i+\Xt_i\Y_i\right)} \\& \hspace*{10em} - \expect{\log_2\left(1+\X_i+\Y_i\right)}
	\end{split}\\
	\begin{split}
		&= \expect{\log_2\left(1+\X_i+\Y_i+\X_i\Y_i\right)} \\&\hspace*{10em} - \expect{\log_2\left(1+\X_i+\Y_i\right)}
	\end{split}\\
	&> 0\,,
\end{align*}
where we require $\Xt_i\simdist\X_i$ and $\X_i, \Y_i > 0$.
\section{Proof of \autoref{thm:bound-upper-ruin-prob-fix}}\label{app:proof-thm-up-bound-outage-prob-fixed}
For the first inequality $\probruin_{t}\leq\boundupper_t$, we use Doob's martingale inequality~\cite[{Chap.~12.6, Thm.~1}]{Grimmett2001}.
In order to be able to apply it, we first formulate $\probruin_t$ in terms of the probability of the maximum of $\netsum_i$ as follows.
We start with the survival probability~$\probsurv_{t}$, which describes the probability that the \gls{sk} budget~$\budget_t$ never falls to zero up to time~$t$, i.e.,
\begin{align*}
	\probsurv_{t}(\initbudget) = \Pr\left(\min_{1\leq i \leq t} \budget_t > 0\right)\,.
\end{align*}
This can be rewritten using the definition of $\budget_t$ from \eqref{eq:def-budget-fixed} as
\begin{align*}
	\probsurv_{t}(\initbudget) = \Pr\left(\max_{1\leq i \leq t} \netsum_i < \initbudget\right)\,,
\end{align*}
which in turn can be expressed in terms of the outage probability~$\probruin_{t}$ as
\begin{equation*}
	\probruin_{t}(\initbudget) = \Pr\left(\max_{1\leq i \leq t} \netsum_i \geq \initbudget\right)\,.
\end{equation*}
Next, we need to show that the random process $\netsum_t=\sum_{i=1}^{t}\netclaim_i$ forms a submartingale (together with the natural filtration), i.e., we need to show that
\begin{equation*}
	\expect{\netsum_{t+1} \;\middle|\; \netsum_{t}, \dots, \netsum_{1}} \geq \netsum_t
\end{equation*}
holds.
This result can be obtained as follows:
\begin{align*}
	\mathbb{E}\Big[\netsum_{t+1} \;\Big|\; \netsum_{t}, &\dots, \netsum_{1}\Big]\\
	&= \expect{\netclaim_1 + \cdots{} + \netclaim_t + \netclaim_{t+1} \;\middle|\; \netclaim_1, \dots{}, \netclaim_t}\\
	&= \netclaim_1 + \cdots{} + \netclaim_t + \expect{\netclaim_{t+1} \;\middle|\; \netclaim_1, \dots{}, \netclaim_t}\\
	&\overset{(a)}{=} \netsum_{t} + \expect{\netclaim_{t+1}}\\
	&\overset{(b)}{\geq} \netsum_{t}\,,
\end{align*}
where $(a)$ holds due to the assumption that all $\netclaim_i$ are independent from each other and $(b)$ follows from \autoref{lem:expected-net-claim}.
The inequality $\probruin_{t}\leq\boundupper_t$ now follows directly from Doob's martingale inequality.

For the second inequality, we use the following general observations.
For a real-valued random variable~$\rv{X}$ with the expected value $\mu=\expect{\rv{X}}$ and variance $\var(\rv{X})=\expect{\rv{X}^2}-\mu^2$, it is apparent that
\begin{equation*}
	0 \leq \expect{\max(\rv{X}, 0)} \leq \expect{\abs*{\rv{X}}}\,.
\end{equation*}
From Jensen's inequality, it follows that
\begin{equation*}
	\expect{\abs*{\rv{X}}}^2 < \expect{\abs*{\rv{X}}^2}
\end{equation*}
since $x^2$ is a strictly convex function.
Combining the above with the fact that $\expect{\abs*{\rv{X}}^2} = \expect{{\rv{X}}^2}$ and $\expect{{\rv{X}}^2}=\var(\rv{X})+\mu^2$, we find that
\begin{align*}
	\expect{\max(\rv{X}, 0)}^2 
	&\leq \expect{\abs*{\rv{X}}}^2\\
	& < \expect{\abs*{\rv{X}}^2}\\
	& = \expect{{\rv{X}}^2}\\
	& = \var(\rv{X})+\mu^2\,,
\end{align*}
and in turn
\begin{equation*}
	\expect{\max(\rv{X}, 0)} < \sqrt{\var(\rv{X})+\mu^2}\,.
\end{equation*}
Now we can set $\rv{X}=\netsum_t$ to obtain the second inequality $\boundupper_t<\hat{\boundupper}_t$.
\section{Proof of \autoref{thm:average-latency-fixed-time}}\label{app:proof-thm-avg-latency}
By definition of the latency~$\latency$ in \eqref{eq:def-latency}, it can be seen that it is a hitting time of the random walk~$\sum_{i=1}^{t}\income_i$.
In \cite[Thm.~4 and 8]{Kotzing2019}, lower and upper bounds on the expected value of such hitting times for random walks with drift are derived.
In particular, it is shown that for a random walk~$\left\{\rv{A}_t\right\}$, the following equality holds:
\begin{equation*}
	\expect{\rv{T} \;\middle|\; \rv{A}_0}=\frac{\rv{A}_0}{\delta}\,,
\end{equation*}
with
\begin{equation*}
	\rv{T} = \inf \left\{t\geq 0 \;\middle|\; \rv{A}_t \leq 0\right\}\,,
\end{equation*}
if $\rv{A}_t-\expect{\rv{A}_{t+1}\;\middle|\; \rv{A}_0, \dots{}, \rv{A}_t} = \delta$ holds.

In order to use this result, we use the correspondence $\rv{A}_t = \initbudget - \sum_{i=1}^{t} \income_i$ with $\rv{A}_0=\initbudget$.
With this, we obtain
\begin{align*}
	\mathbb{E}\Big[ \rv{A}_{t+1}\;&\Big|\; \rv{A}_0, \dots{}, \rv{A}_t \Big]\\
	&= \expect{\initbudget - \income_1 - \income_2 - \cdots{} - \income_t - \income_{t+1} \;\middle|\; \rv{A}_0, \dots{}, \rv{A}_t}\\
	&= \initbudget - \income_1 - \income_2 - \cdots{} - \income_t - \expect{\income_{t+1}}\\
	&= \rv{A}_t - \expect{\income_{t+1}}\,,
\end{align*}
where the steps closely follow the ones in the proof of \autoref{thm:bound-upper-ruin-prob-fix}.
Based on the above relation, it is clear that
\begin{equation*}
	\rv{A}_t-\expect{\rv{A}_{t+1}\;\middle|\; \rv{A}_0, \dots{}, \rv{A}_t}
	= \expect{\income_{t+1}}
	= \delta\,.
\end{equation*}
Therefore, we have based on \cite{Kotzing2019}
\begin{equation*}
	\expect{\latency \;\middle|\; \initbudget} = \frac{\initbudget}{\expect{\income_1}}\,,
\end{equation*}
where $\expect{\income_1} = \cdots = \expect{\income_{t+1}}$ stems from the fact that we assume all $\income_i$ to be \gls{iid}.
Since $\initbudget$ is assumed to be a constant, it follows for the expectation of~$\latency$ that
\begin{equation*}
	\expect{\latency} = \expect{\expect{\latency \;\middle|\; \initbudget}} = \frac{\initbudget}{\expect{\income_1}}\,.
\end{equation*}
Since we need to reach $\initbudget^{\survtime}(\varepsilon)$ as an initial budget for the given system parameters, we obtain \eqref{eq:average-latency-fixed} as a final expression.
\section{Proof of \autoref{thm:prob-ultimate-ruin-random}}\label{app:proof-thm-up-bound-ult-outage-prob-random}
The proof closely follows the lines of the proof of Lundberg's inequality for the classical insurance risk model~\cite[Chap.~5]{Schmidli2017}.

First, recall that we aim to show
\begin{equation*}
	\probruin(\initbudget) = \lim\limits_{t\to\infty} \probruin_{t}(\initbudget) \leq \exp\left(-\adjcoeff^\star \initbudget\right)\,.
\end{equation*}
We start the proof by introducing the following functions:
\begin{align}
	g(\adjcoeff) &= \log\left(\expect{\exp\left(\adjcoeff\netclaim_1\right)}\right)\,,\\
	\rv{A}_t^{\adjcoeff} &= \exp\left(\adjcoeff\netsum_{t} - tg(\adjcoeff)\right)\,,
\end{align}
where $\netsum_{t}=\sum_{i=1}^{t}\netclaim_i$ is again the accumulated net usage.
Next, we show that the stochastic process $\{\rv{A}_t^{\adjcoeff}\}_{t\geq 1}$ is a martingale,
\begin{align*}
	\mathbb{E}\Big[ \rv{A}_{t+1}^{\adjcoeff} \;\Big|\;& \rv{A}_t^{\adjcoeff}, \dots{}, \rv{A}_{1}^{\adjcoeff}\Big]\\
	&= \expect{\exp\left(\adjcoeff\netsum_{t+1} - (t+1)g(\adjcoeff)\right) \;\middle|\; \rv{A}_t^{\adjcoeff}, \dots{}, \rv{A}_{1}^{\adjcoeff}}\\
	&= \expect{\rv{A}_{t}^{\adjcoeff} \cdot \exp\left(\adjcoeff\netclaim_{t+1} - g(\adjcoeff)\right) \;\middle|\; \rv{A}_t^{\adjcoeff}, \dots{}, \rv{A}_{1}^{\adjcoeff}}\\
	&= \rv{A}_{t}^{\adjcoeff} \cdot \expect{\exp\left(\adjcoeff\netclaim_{t+1} - g(\adjcoeff)\right) \;\middle|\; \rv{A}_t^{\adjcoeff}, \dots{}, \rv{A}_{1}^{\adjcoeff}}\\
	&\stackrel{(a)}{=} \rv{A}_{t}^{\adjcoeff} \cdot \expect{\exp\left(\adjcoeff\netclaim_{t+1} - g(\adjcoeff)\right)}\\
	&= \rv{A}_{t}^{\adjcoeff} \cdot \expect{\exp\left(\adjcoeff\netclaim_{t+1} - \log\left(\expect{\exp\left(\adjcoeff\netclaim_1\right)}\right)\right)}\\
	&= \rv{A}_{t}^{\adjcoeff} \cdot \frac{\expect{\exp\left(\adjcoeff\netclaim_{t+1}\right)}}{\expect{\exp\left(\adjcoeff\netclaim_1\right)}}\\
	&\stackrel{(b)}{=} \rv{A}_{t}^{\adjcoeff}\,,
\end{align*}
where we use the independence assumption in step~$(a)$ and the assumption of identical distributions in step~$(b)$.
In order to take the outage condition into account, we define the following stopping time:
\begin{equation}
	\tau_m = \inf\left\{t \;\middle|\; \netsum_{t} \geq \initbudget\right\} \wedge m\,,
\end{equation}
where $\wedge$ denotes the minimum operator.
By the definition of~$\adjcoeff^\star$ from \eqref{eq:def-adj-coeff-ult-ruin-prob-random}, we have that
\begin{equation*}
	\expect{\rv{A}_{1}^{\adjcoeff^\star}} = \expect{\exp\left(\adjcoeff^\star \netclaim_1\right)} = 1\,.
\end{equation*}
It further follows
\begin{align*}
	1
	&= \expect{\rv{A}_{1}^{\adjcoeff^\star}}\\
	&\stackrel{(a)}{=} \expect{\rv{A}_{\tau_m}^{\adjcoeff^\star}}\\
	&\geq \expect{\rv{A}_{\tau_m}^{\adjcoeff^\star} \one_{\tau_m < m}}\\
	&\stackrel{(b)}{\geq} \expect{\exp\left(\adjcoeff^\star\initbudget\right) \one_{\tau_m < m}}\\
	&= \exp\left(\adjcoeff^\star\initbudget\right) \Pr\left(\inf\left\{t \;\middle|\; \netsum_{t} \geq \initbudget\right\} < m\right)\\
	&\stackrel{(c)}{=} \exp\left(\adjcoeff^\star\initbudget\right) \probruin_{m}(\initbudget)\,,
\end{align*}
where $(a)$ is due to the optional stopping theorem~\cite[Chap.~12.5]{Grimmett2001}, $(b)$ follows from the definition of the stopping time~$\tau_m$, and $(c)$ uses the definition of the outage probability in finite time.
Rearranging finally yields
\begin{equation}
	\probruin_{m}(\initbudget) \leq \exp\left(-\adjcoeff^\star \initbudget\right), \quad \text{for all}\quad m\in\naturals\,.
\end{equation}
As a final step, we need to show that $\adjcoeff^\star$ is the positive solution to \eqref{eq:def-adj-coeff-ult-ruin-prob-random}, assuming it exists.
By Jensen's inequality, we have
\begin{equation*}
	g(\adjcoeff) = \log\left(\expect{\exp\left(\adjcoeff\netclaim_1\right)}\right) \geq \expect{\adjcoeff\netclaim_1}\,,
\end{equation*}
and by the definition of $\adjcoeff^\star$, it follows that
\begin{equation*}
	g(\adjcoeff^\star) = 0 \geq \adjcoeff^\star \expect{\netclaim_i}\,,
\end{equation*}
which is a contradiction for $\adjcoeff^\star < 0$, since $\expect{\netclaim_i} < 0$ by the assumption on~$\probtx$ and \autoref{cor:expected-net-claim-random}.

\printbibliography[heading=bibintoc]

\end{document}